\newtheorem{Proposition}{Proposition}
\numberwithin{equation}{section}
\numberwithin{Proposition}{section}
\def \tyb#1{\hbox{\tiny{[{\it{#1}}]}}}
\def \ty#1{\hbox{\tiny{{\it{#1}}}}}
\def \h#1{\widehat{#1}}
\def \t#1{\widetilde{#1}}
\def \th#1{\widehat{\widetilde{#1}}}
\def \S#1{S^{(#1)}}
\DeclareMathAccent{\wtilde}{\mathord}{largesymbols}{"65}
\DeclareMathAccent{\what}{\mathord}{largesymbols}{"62}
\def\m@th{\mathsurround=0pt}
\mathchardef\bracell="0365
\def\upbrall{$\m@th\bracell$}
\def\undertilde#1{\mathop{\vtop{\ialign{##\crcr
    $\hfil\displaystyle{#1}\hfil$\crcr
     \noalign
     {\kern1.5pt\nointerlineskip}
     \upbrall\crcr\noalign{\kern1pt
   }}}}\limits}
\def\underhat#1{\mathop{\vtop{\ialign{##\crcr
    $\hfil\displaystyle{#1}\hfil$\crcr
     \noalign
     {\kern1.5pt\nointerlineskip}
     \upbrall\crcr\noalign{\kern1pt
   }}}}\limits}
\newcommand{\wh}{\widehat}
\newcommand{\wt}{\widetilde}
\newcommand{\bA}{\boldsymbol{A}}
\newcommand{\bB}{\boldsymbol{B}}
\newcommand{\bC}{\boldsymbol{C}}
\newcommand{\bI}{\boldsymbol{I}}
\newcommand{\bk}{\boldsymbol{k}}
\newcommand{\bK}{\boldsymbol{K}}
\newcommand{\bM}{\boldsymbol{M}}
\newcommand{\bS}{\boldsymbol{S}}
\newcommand{\bc}{\boldsymbol{c}}
\newcommand{\br}{\boldsymbol{r}}
\newcommand{\bs}{\boldsymbol{s}}
\newcommand{\bU}{\boldsymbol{U}}
\newcommand{\mbe}{\mathbf{e}}
\newcommand{\bu}{\boldsymbol{u}}
\newcommand{\bT}{\boldsymbol{T}}
\newcommand{\bF}{\boldsymbol{F}}
\newcommand{\bG}{\boldsymbol{G}}
\newcommand{\bH}{\boldsymbol{H}}
\newcommand{\Ga}{\boldsymbol{\Gamma}}
\newcommand{\bW}{\boldsymbol{W}}
\newcommand{\bY}{\boldsymbol{Y}}
\newcommand{\ST}{\hbox{\tiny\it{T}}}
\begin{document}

\title{The Sylvester equation and the elliptic Korteweg-de Vries system}

\author{Ying-ying Sun$^{1}$, ~~ Da-jun Zhang$^{1}$\footnote{Corresponding author. Email: djzhang@staff.shu.edu.cn}, ~~ Frank W. Nijhoff$^{2}$ \\
{\small \it $^{1}$Department of Mathematics, Shanghai University, Shanghai 200444, P.R. China}\\
{\small \it $^{2}$Department of Applied Mathematics, School of Mathematics, University of Leeds, LS2 9JT, UK}}

\maketitle

\begin{abstract}

The elliptic Korteweg-de Vries (KdV) system is a multi-component
generalization of the lattice potential KdV equation, whose soliton
solutions are associated with an elliptic Cauchy kernel (i.e., a Cauchy kernel on the torus).
In this paper we generalize the class of solutions by using a Sylvester type
matrix equation and rederiving  the system from the associated Cauchy matrix.
Our starting point is the Sylvester equation in the form of
$~\boldsymbol{k} \boldsymbol{M}+ \boldsymbol{M} \boldsymbol{k}
= \boldsymbol{r} {\boldsymbol{c}}^{T}-g\boldsymbol{K}^{-1} \boldsymbol{r} {\boldsymbol{c}}^{T} \boldsymbol{K}^{-1}$
where $\boldsymbol{k}$ and $\boldsymbol{K}$ are commutative matrices and obey the matrix relation
${\boldsymbol{k}}^2=\boldsymbol{K}+3e_1\boldsymbol{I}+g{\boldsymbol{K}}^{-1}$.
The obtained elliptic equations, both discrete and continuous, are formulated by the scalar function $S^{(i,j)}$
which is defined using $(\boldsymbol{k},\boldsymbol{K}, \boldsymbol{M}, \boldsymbol{r},\boldsymbol{c})$ and constitute an infinite size symmetric matrix.
Lax pairs for both the discrete and continuous system are derived.
The explicit solution $\boldsymbol{M}$ of the Sylvester equation and generalized solutions of the obtained elliptic equations
are presented according to the canonical forms of matrix $\boldsymbol{k}$.

\vskip 8pt
\noindent {\bf Keywords:} The Sylvester equation, elliptic KdV systems,
Cauchy matrix approach, solutions

\noindent {\bf PACS:}\quad  02.30.Ik, 02.30.Ks, 05.45.Yv


\end{abstract}

\maketitle

\section{Introduction}
\label{sec-1}

The following elliptic lattice potential Korteweg-de Vries KdV (elpKdV) system
\begin{subequations}\label{ell-la-kdv}
\begin{align}
&(a+b+u-\wh{\wt{u}})(a-b+\wh{u}-\wt{u})
=a^2-b^2+g(\wt s-\wh s)(\wh{\wt{s}}-s),\label{ell-la-kdv-1}\\
&(\wh{\wt{s}}-s)(\wt{w}-\wh{w})=
[(a+u)\wt s-(b+u)\wh s]\wh{\wt{s}}-[(a-\wh{\wt{u}})\wh s-(b-\wh{\wt{u}})\wt s]s,\label{ell-la-kdv-2}\\
&(\wh{s}-\wt s)(\wh{\wt{w}}-{w})=
[(a-\wt u) s+(b+\wt u)\wh{\wt{s}}]\wh{s}-[(a+\wh{u})\wh{\wt{s}}+(b-\wh{u}) s]\wt s, \label{ell-la-kdv-3}\\
&(a+ u-\frac{\wt w}{\wt s})({a-\wt u+\frac{w}{s}})=a^2-R(s\t s), \label{ell-la-kdv-4}\\
&(b+ u-\frac{\wh w}{\wh s})({b-\wh u+\frac{w}{s}})=b^2-R(s\h s)\label{ell-la-kdv-5}
\end{align}
\end{subequations}
and the continuous elliptic potential Korteweg-de Vries KdV (epKdV) system
\begin{subequations}
\label{c-ell-KdV-couple-sec1}
\begin{align}
s_t=&4s_{xxx}+6s_x[R(s^2)-A^2-\frac{2A s_x}{s}-\frac{2s_{xx}}{s}], \label{c-ell-KdV-01}\\
A_t=&4A_{xxx}-6A^2A_x +6A_xR(s^2)-\frac{6 s_x}{s}(R(s^2))_x, \label{c-ell-KdV-02}
\end{align}
\end{subequations}
with $A=-u+\frac{w}{s}$ were first derived in \cite{NP-JNMP-2003} through the direct linearisation approach.
Here $R(x)$ is associated with the elliptic curve
\begin{equation}\label{ell-curve-2}
    y^2=R(x)=\frac{1}{x}+3e_1+gx,
\end{equation}
where $e_1, g \in \mathbb{C}$ are moduli of the elliptic curve. In equation \eqref{ell-la-kdv} we use
the conventional tilde-hat notations to express shifts w.r.t. discrete variables,
e.g.,
\begin{equation}
 u\doteq u(n,m) \doteq u_{n,m},~ \wt u \doteq u_{n+1,m}, ~\wh u \doteq u_{n,m+1},~\th u \doteq u_{n+1,m+1}.
\end{equation}
The direct linearisation approach starts from  an integral over certain  set of contours,
from which an infinite order matrix $\bU$ is introduced.
Then the closed forms of the elements of $\bU$ yield nonlinear lattice
equations (as examples, see \cite{NQC-PLA-1983,NP-JNMP-2003,N-arxiv-2011,ZZN-SAM-2012}).

In this paper, we will rederive the above two elliptic systems
using the Cauchy matrix approach,
which was successfully applied in \cite{N-2004-math,NAH-2009-JPA} to derive integrable lattice
equations and to analyse  their underlying structures.
The direct linearisation approach and the Cauchy matrix approach
look different w.r.t. their individual procedures,
but in fact they are deeply related together.
The former needs a Cauchy kernel as a key auxiliary role
and both approaches make use of a same infinite order matrix $\bU$ of which
elements generate scaler nonlinear equations.
We use the notation $\bS$ for this infinite order matrix in our paper.

The Cauchy matrix approach is a pure algebraic procedure and it enables us
to obtain equations, their explicit soliton solutions and Lax pairs.
In the Cauchy matrix approach, the Sylvester equation
\begin{equation}
\bA\bM-\bM\bB=\bC
\label{SE}
\end{equation}
can be viewed as a starting point\cite{XZZ-2014-JNMP,ZZ-SAM-2013}. The matrix $\bM$ is a dressed Cauchy matrix (see
the factorization \eqref{factor-M})
and is used to introduce $\tau$-function
\cite{N-2004-math,NAH-2009-JPA,XZZ-2014-JNMP}.

In the present paper we start from the following Sylvester equation:
\begin{equation}
 \bk \bM+ \bM \bk= \br{\bc}^{\ST}-g\bK^{-1}\br {\bc}^{\ST} \bK^{-1},
\label{SE-2a}
\end{equation}
where $\br=(r_1,r_2,\cdots,r_N)^{\ST}$, $\bc=(c_1,c_2,\cdots,c_N)^{T}$
and $\bk,~\bK\in \mathbb{C}_{N\times N}$
obey the matrix relation
\begin{equation}\label{ell-relation-2}
    {\bk}^2=\bK+3e_1\bI+g{\bK}^{-1},~~ \bk\bK=\bK\bk,
\end{equation}
in which $\bI$ is the $N$th-order unit matrix.
Based on the above Sylvester equation, the dispersion relations for the elpKdV system is defined by
\begin{equation}
(a \bI- \bk)\wt{\br}  =(a \bI+\bk) \br,~~
(b \bI-\bk)\wh{\br}  =(b \bI+\bk) \br,
\end{equation}
and for the epKdV system by
\begin{equation}
\br_x=\bk \br,~~\br_t=4\bk^3 \br,~~
\bc^{\ST}_x=\bc^{\ST}\bk,~~\bc^{\ST}_t=4\bc^{\ST} \bk^3.
\end{equation}

In next section, we will focus on the matrix system \eqref{ell-relation-2}.
We will see that the matrix system \eqref{ell-relation-2} is actually governed by the points on the elliptic curve \eqref{ell-curve-2}.
Sec.\ref{sec-3} will focus on the Sylvester equation \eqref{SE-2a}
and the scalar function $\S{i,j}$ defined in \eqref{Sij}.
Explicit solution $\bM$ of \eqref{SE-2a} will be given for the cases of $\bk$ being diagonal and Jordan block form and their combination.
Solutions of the elliptic elpKdV and epKdV equations are consequently obtained.
As a generic element, $\S{i,j}$ composes an infinite order matrix $\bS$ (same as the matrix $\bU$ in the direct linearisation approach
in \cite{NP-JNMP-2003}).
$\{\S{i,j}\}$ satisfy some recurrence relations which can be viewed as discrete equations of $\S{i,j}$ defined on $\mathbb{Z}\times \mathbb{Z}$
and will play crucial roles in deriving the continuous epKdV system.

Apart from the Sec.\ref{sec-2} and \ref{sec-3} above mentioned,
in Sec.\ref{sec-4} and Sec.\ref{sec-5} we respectively derive the elpKdV system and epKdV system
together with their Lax pairs.
In Sec.\ref{sec-6} we discuss continuum limits of the elpKdV system.
Sec. \ref{sec-7} is for conclusions.
Besides, in Appendix we list properties of lower triangular Toeplitz matrices
which play important roles in our paper.

\section{Points on the elliptic curve: parametrization and selection }\label{sec-2}

\subsection{Scalar case}\label{sec-2-1}

Consider the elliptic curve
\begin{equation}
k^2=R\Bigl(\frac{1}{K}\biggr)=K+3e_1+\frac{g}{K}.
\label{ec-sca}
\end{equation}
The discrete plane wave factor is defined as
\begin{equation}
\rho_i=\biggl(\frac{a+k_i}{a-k_i}\biggr)^n \biggl(\frac{b+k_i}{b-k_i}\biggr)^m\,\rho^0_{i}
\label{(2.2)}
\end{equation}
with a phase factor $\rho^0_i$, where we require that $k_i$ together with $K_i$ obeys the elliptic curve \eqref{ec-sca}, i.e.
\begin{equation}
k^2_i=K_i+3e_1+\frac{g}{K_i}.
\label{ec-sca-i}
\end{equation}
In the case of classical soliton solutions (see \cite{NAH-2009-JPA,HZ-JPA-2009}),
$\{k_i\}$ play the role of wave numbers, which should be distinct so that they can represent different solitons.
Since $k_i$ and $K_i$ are coupled through the elliptic curve \eqref{ec-sca-i},
(say, both $k_i$ and $-k_i$ correspond to a same $K_i$,)
we consequently require that they can identify each other, i.e.
\begin{equation}
k_i\neq k_j \Leftrightarrow K_i\neq K_j.
\label{id-cond-1}
\end{equation}
Note that for the arbitrary two points $(k_i,K_i)$ and  $(k_j,K_j)$ on the elliptic curve \eqref{ec-sca} we always have the relation
\begin{equation}
(k_i+k_j)(k_i-k_j)=(K_i-K_j)\,\frac{K_iK_j-g}{K_iK_j}.
\label{fact-ec}
\end{equation}
This means that if we take
\begin{equation}
(k_i+k_j)(K_iK_j-g)\neq 0,
\label{id-cond-2}
\end{equation}
then \eqref{id-cond-1} is guaranteed.
Equation \eqref{id-cond-2} is  the criteria that we select points from the elliptic curve \eqref{ec-sca}.
As a consequence of \eqref{id-cond-2}, $k_i\neq 0$.

The elliptic curve \eqref{ec-sca} can be parameterized using Weierstrass's elliptic function $\wp(\kappa)$ as the following (cf.\cite{NP-JNMP-2003}):
\begin{subequations}
\begin{align}
& K=\wp(\kappa)-e_1,~~ k=\frac{\wp'(\kappa)}{2(\wp(\kappa)-e_1)},\label{param-kK}\\
& e_1=\wp(\omega),~~ g=(e_1-e_2)(e_1-e_3),\label{param-eg}
\end{align}
\label{param}
\end{subequations}
where $e_2=\wp(\omega+\omega'),~e_3=\wp(\omega')$ and $\omega$ and $\omega'$ are respectively the half periods of $\wp(\kappa)$.

An elliptic function is a meromorphic function with double (complex) periods.
Suppose that $f(\kappa)$ is an elliptic function and its double  periods are $2\omega$ and $2\omega'$, respectively.
Let $\mathbb{D}$ denote a fundamental  period parallelogram $ABCD$ as described in  Fig.\ref{F:1}.
The border segments $AB$ and $BC$ together with points $\{A,B,C\}$ are not included in $\mathbb{D}$
due to the double periodicity.
\begin{figure}[h]
\setlength{\unitlength}{0.0004in}
\vskip 0.5cm
\hspace{4.5cm}
\begin{picture}(3482,2813)(0,-10)
\put(1675,2708){\circle{150}}
\put(1075,2908){\makebox(0,0)[lb]{\footnotesize{$-\omega+\omega'$}}} \put(1275,2608){\makebox(0,0)[lb]{\footnotesize{$A$}}}
\put(3275,2708){\circle*{150}}
\put(3175,2908){\makebox(0,0)[lb]{\footnotesize{$\omega'$}}}
\put(4875,2708){\circle*{150}}
\put(4575,2908){\makebox(0,0)[lb]{\footnotesize{$\omega+\omega'$}}}
\put(1275,2608){\makebox(0,0)[lb]{\footnotesize{$A$}}}
\put(4975,2608){\makebox(0,0)[lb]{\footnotesize{$D$}}}

\put(675,1708){\makebox(0,0)[lb]{\footnotesize{$-\omega$}}}
\put(2875,1408){\makebox(0,0)[lb]{\footnotesize{$O$}}}
\put(4475,1708){\circle*{150}}
\put(4675,1708){\makebox(0,0)[lb]{\footnotesize{$\omega$}}}

\put(875,708){\circle{150}}
\put(175,308){\makebox(0,0)[lb]{\footnotesize{$-\omega-\omega'$}}}
\put(2175,308){\makebox(0,0)[lb]{\footnotesize{$-\omega'$}}}
\put(4075,708){\circle{150}}
\put(3675,308){\makebox(0,0)[lb]{\footnotesize{$\omega-\omega'$}}}
\put(475,628){\makebox(0,0)[lb]{\footnotesize{$B$}}}
\put(4275,628){\makebox(0,0)[lb]{\footnotesize{$C$}}}

\drawline(1675,2708)(4875,2708)
\drawline(1275,1708)(4475,1708)
\dashline{150}(875,708)(4075,708)
\dashline{150.000}(1275,1708)(875,708)
\dashline{150.000}(1275,1708)(1675,2708)
\drawline(4875,2708)(4075,708)
\drawline(3275,2708)(2475,708)
\end{picture}
\vskip -0.5cm
\caption{Fundamental period parallelogram $\mathbb{D}$}\label{F:1}
\end{figure}
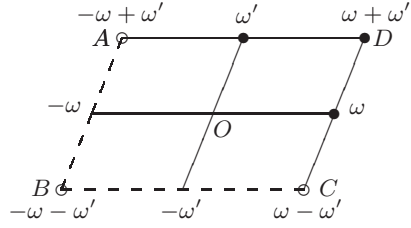
The elliptic function $f(\kappa)$ obeys the Liouville's theorems (cf.\cite{A-book}) that is, for
the elliptic function $f(\kappa)$, in the period parallelogram $\mathbb{D}$
the times that it takes value $a$ equals to the order of $f(\kappa)$.

Under the  parametrization \eqref{param}, for the points on the curve \eqref{ec-sca} we have
\begin{equation}
K_i=\wp(\kappa_i)-e_1,~~ k_i=\frac{\wp'(\kappa_i)}{2(\wp(\kappa_i)-e_1)}.
\label{param-i}
\end{equation}
Then  the criteria \eqref{id-cond-2} can alternatively be
described through the following requirement on $\kappa_i$:
\begin{subequations}
\begin{align}
& \kappa_i\in \mathbb{D}'=\mathbb{D}\setminus\{0,\omega,\omega',\omega+\omega'\},\label{id-cond-3a}\\
& (\wp(\kappa_i)-e_1)(\wp(\kappa_j)-e_1)\neq g,\label{id-cond-3b}\\
& \kappa_i+\kappa_j\neq 0,
\label{id-cond-3c}
\end{align}
\label{id-cond-3}
for $i,j=1,2,\cdots,N$.
\end{subequations}
We note that here and after when we talk about $\kappa_i+\kappa_j$
we always mean $\kappa_i+\kappa_j$ modulo the periodic lattice, i.e. $\kappa_i+\kappa_j (\mathrm{mod}(2\omega,2\omega'))$.
In fact, $0$ is the pole of $\wp(\kappa)$ and thus we require $\kappa_i\neq 0$ to avoid singularities.
Besides, $k_i$ can not be zero as a consequence of $k_i+k_j\neq 0$.
In the light of the Liouville's theorems (cf.\cite{A-book})
since $\wp'(\kappa)$ is a third-order elliptic function
in the period parallelogram $\mathbb{D}$ it only has 3 zeros which are $\omega, \omega'$ and $\omega+\omega'$.
To avoid breaking the one-to-one correspondence of $k_i$ and $K_i$, we
require $\kappa_i\notin \{\omega,\omega',\omega+\omega'\}$. Thus we have \eqref{id-cond-3a}.
\eqref{id-cond-3b} is from the requirement $K_iK_j\neq g$.
For the third one in \eqref{id-cond-3}, we can prove that under
\eqref{id-cond-3a} and \eqref{id-cond-3b} we have
\begin{equation}
\kappa_i+\kappa_j = 0 ~\Leftrightarrow k_i+k_j = 0.
\end{equation}
In fact, since $\wp(\kappa)$ is even and $\wp'(\kappa)$ is odd, from the parametrization \eqref{param-i}
we immediately find that if $\kappa_i+\kappa_j = 0$ then $k_i+k_j=0$.
On the other hand, under \eqref{id-cond-3b}, from the factorization \eqref{fact-ec},
if $k_i+k_j = 0$ and \eqref{id-cond-3b} holds, there must be $K_i=K_j$, which means $\kappa_i=\pm\kappa_j$ in $\mathbb{D}$ in the light of
Liouville's theorems (cf. \cite{A-book}).  The case $\kappa_i=\kappa_j$ is impossible because
this case yields $k_i=k_j=0$ due to $k_i+k_j=0$. $k_i=0$ requires $\wp'(\kappa_i)=0$
which is impossible in $\mathbb{D}'$. Thus, $\kappa_i=-\kappa_j$ is the only choice,
i.e. $\kappa_i+\kappa_j = 0 $.

\subsection{Matrix system \eqref{ell-relation-2}}
\label{sec:2.2}

Let us come to the matrix relation \eqref{ell-relation-2}.
Although at first glance \eqref{ell-relation-2} suggests an interpretation
of this matrix relation as a matrix version of an elliptic curve, it just represents the coordinatization of a collection
of points on the given elliptic curve \eqref{ec-sca}.

To understand this, let us consider a similarity transformation
\begin{equation}
\bk_1=\bT \bk \bT^{-1}, ~~\bK_1=\bT \bK \bT^{-1},
\label{trans-sim}
\end{equation}
where $\bT$ serves as the transform matrix.
Obviously, under the above similarity transformation,
\eqref{ell-relation-2} is formally invariant:
\begin{equation}\label{ell-relation-2-1}
    {\bk_1}^2=\bK_1+3e_1\bI+g{\bK_1}^{-1},~~ \bk_1\bK_1=\bK_1\bk_1.
\end{equation}
Thus, in the following we only need to consider the relation
\begin{equation}\label{ell-relation-2-2}
    {\Ga}^2=\bK+3e_1\bI+g{\bK}^{-1},~~ \Ga\bK=\bK\Ga
\end{equation}
where $\Ga$ is the canonical form of $\bk$.

When
\begin{subequations}
\begin{equation}
\Ga=\mathrm{Diag}(k_1,k_2,\cdots,k_N),
\end{equation}
$\bK$ is taken as
\begin{equation}
\bK=\mathrm{Diag}(K_1,K_2,\cdots,K_N),
\end{equation}
\end{subequations}
where $(k_i,K_i)$ are the points on \eqref{ec-sca}, i.e. satisfying \eqref{ec-sca-i}.
$\{k_i\}$ is the eigenvalue set of $\bk$. Here we request that each $k_i\neq 0$ and $k_i^2\neq k_j^2$ for $i\neq j$.
Under such a requirement one can see that the criteria \eqref{id-cond-2} is satisfied in the light of the factorization \eqref{fact-ec}.

It is interesting to consider the case that $\Ga$ is a $N$-th order Jordan block
\begin{align}
&\Ga
=\left(\begin{array}{cccccc}
k_1 & 0    & 0   & \cdots & 0   & 0 \\
1   & k_1  & 0   & \cdots & 0   & 0 \\
0   & 1  & k_1  & \cdots & 0   & 0 \\
\vdots &\vdots &\vdots &\vdots &\vdots &\vdots \\
0   & 0    & 0   & \cdots & 1   & k_1
\end{array}\right),~~k_1\neq 0.
\label{J}
\end{align}
In this case, $k_1$ is the only eigenvalue of $\bk$ with algebraic multiplicity $N$
(and geometric multiplicity 1).
In terms of the parametrization \eqref{param} we need $\kappa_1$ satisfies
\begin{align}
& \kappa_1\in \mathbb{D}\setminus\{0,\omega,\omega',\omega+\omega'\},~~
\wp(\kappa_1)\neq \sqrt{g}+e_1.
\end{align}
To find $\bK$ that corresponds to the Jordan block \eqref{J}, we will make use of properties of Toeplitz matrices.
Let us shortly introduce such matrices. For more details please see Appendix \ref{A:1}.
A matrix of the following form
\begin{equation}
\bT^{\tyb{N}}(\{a_j\}^{N}_{1})
=\left(\begin{array}{cccccc}
a_1 & 0    & 0   & \cdots & 0   & 0 \\
a_2 & a_1  & 0   & \cdots & 0   & 0 \\
a_3 & a_2  & a_1 & \cdots & 0   & 0 \\
\vdots &\vdots &\cdots &\vdots &\vdots &\vdots \\
a_{N} & a_{N-1} & a_{N-2}  & \cdots &  a_2   & a_1
\end{array}\right)
\label{T}
\end{equation}
is called a lower triangular Toeplitz (LTT) matrix.
All the  $N$-th order LTT matrices compose a commutative set $\mathcal{T}^{\tyb{N}}$.
The complex value matrix \eqref{T} can be generated by an analytic function $f(k)$ at certain point $k=k_0$
through
\begin{equation}
a_j=\frac{\partial^{j-1}_k}{(j-1)!}f(k)|_{k=k_0},~~~(j=1,2,\cdots,N).
\label{aj-f(k)}
\end{equation}
With this correspondence we denote \eqref{T} by $\bT^{\tyb{N}}[f(k_0)]$,
and we have  properties (See proposition \ref{prop-A-3b})
\begin{align*}
(\bT^{\tyb{N}}[f(k_0)])^2=\bT^{\tyb{N}}[f^2(k_0)],~~ (\bT^{\tyb{N}}[f(k_0)])^{-1}=\bT^{\tyb{N}}[1/f(k_0)].
\end{align*}
We also note that the Jordan block \eqref{J} is a LLT matrix generated by $f(k)=k$ at $k=k_1$.

Now we can look for a LTT matrix $\bK$ which  satisfies \eqref{ell-relation-2-2} with $\Ga$  \eqref{J}.
In fact, according to proposition \ref{prop-A-1}, for the Jordan block \eqref{J}, when $\Ga \bK=\bK \Ga$
there must be $\bK \in \mathcal{T}^{\tyb{N}}$.
By taking derivatives with respect to $k$ of the elliptic curve \eqref{ec-sca} at the point $(k_1,K_1)$
where $K_1=K(k_1)$ and $K(k)$ is viewed as an implicit function of $k$ determined by the curve \eqref{ec-sca}, we find (for $i\geq j$)
\begin{equation}\label{de-j}
    \frac{1}{j!}\partial_{k}^{j}\, k^{2}=\frac{1}{j!}\partial_{k}^{j} K(k)+3e_1\delta_{j,0}+\frac{g }{j!}\partial_{k}^{j}\frac{1}{K(k)},
    ~~ (\mathrm{at} ~ k=k_1).
\end{equation}
The l.h.s. and the first term and third term on the r.h.s.
respectively correspond to the elements of the LTT matrices generated by
$f(k)=k^2$, $K(k)$ and $1/K(k)$ at $k=k_1$.
This means for the Jordan block case \eqref{J} if we take $\bK=\bT^{\tyb{N}}[K(k_1)]$ then the relation \eqref{ell-relation-2-2} holds.

In Sec.\ref{sec:3.2.3} we will list solutions of \eqref{ell-relation-2-2} using unified notations.

\section{The Sylvester equation and infinite matrix structure}
\label{sec-3}

In this section we will first investigate solutions of the Sylvester equation \eqref{SE-2a}
and derive explicit expression of $\bM$.
Then, with the help of some special matrices we will investigate
recurrence relations of scalar function $\S{i,j}$ (defined in \eqref{Sij})
and properties of the infinite matrix $\bS$ composed by $\S{i,j}$.

\subsection{Solvability of \eqref{SE-2a}}

For the solution of the Sylvester equation \eqref{SE}, there is the following  well known result \cite{Sylvester-1884}.

\begin{Proposition}\label{prop-1}
Denote the eigenvalue sets of $\bA$ and $\bB$ by $\mathcal{E}(\bA)$ and $\mathcal{E}(\bB)$, respectively.
For the known $\bA, \bB$ and $\bC$, the Sylvester equation \eqref{SE} has a unique solution $\bM$ if and only if
$\mathcal{E}(\bA)\cap \mathcal{E}(\bB)=\varnothing$.
\end{Proposition}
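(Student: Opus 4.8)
The plan is to prove both directions by reducing the Sylvester equation to a linear system on the vector space of matrices. Define the linear operator $\mathcal{L}:\mathbb{C}_{N\times N}\to\mathbb{C}_{N\times N}$ by $\mathcal{L}(\bM)=\bA\bM-\bM\bB$. Then \eqref{SE} reads $\mathcal{L}(\bM)=\bC$, and since $\mathcal{L}$ acts on a finite-dimensional space, it has a unique solution for every $\bC$ if and only if $\mathcal{L}$ is invertible, equivalently $\ker\mathcal{L}=\{0\}$, equivalently $0\notin\mathcal{E}(\mathcal{L})$. So the whole proposition reduces to computing the eigenvalues of $\mathcal{L}$ and showing $0\notin\mathcal{E}(\mathcal{L})$ exactly when $\mathcal{E}(\bA)\cap\mathcal{E}(\bB)=\varnothing$.

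First I would establish that $\mathcal{E}(\mathcal{L})=\{\alpha-\beta:\alpha\in\mathcal{E}(\bA),\ \beta\in\mathcal{E}(\bB)\}$. The clean way to see this is to pass to (upper or lower) triangular forms: by Schur's theorem write $\bA=\bU\bA_\Delta\bU^{-1}$ and $\bB=\bV\bB_\Delta\bV^{-1}$ with $\bA_\Delta,\bB_\Delta$ triangular carrying the eigenvalues $\alpha_i,\beta_j$ on their diagonals. The substitution $\bM\mapsto\bU^{-1}\bM\bV$ conjugates $\mathcal{L}$ into the operator $\bN\mapsto\bA_\Delta\bN-\bN\bB_\Delta$, which does not change the spectrum. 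One then orders a basis of $\mathbb{C}_{N\times N}$ (the matrix units $E_{ij}$) appropriately so that this operator is itself triangular, with diagonal entries $\alpha_i-\beta_j$; hence its eigenvalues, and therefore those of $\mathcal{L}$, are precisely the differences $\alpha_i-\beta_j$. An alternative, slicker route is the Kronecker-product identity $\mathrm{vec}(\bA\bM-\bM\bB)=(\bI\otimes\bA-\bB^{\ST}\otimes\bI)\,\mathrm{vec}(\bM)$, combined with the standard fact that the eigenvalues of $\bI\otimes\bA-\bB^{\ST}\otimes\bI$ are all sums/differences of eigenvalues of the two factors; I would likely present this version as it is the most economical.

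From the spectral description the proposition follows immediately in both directions: $\mathcal{L}$ is invertible $\iff$ no eigenvalue $\alpha_i-\beta_j$ vanishes $\iff$ $\alpha_i\neq\beta_j$ for all $i,j$ $\iff$ $\mathcal{E}(\bA)\cap\mathcal{E}(\bB)=\varnothing$. Uniqueness and existence are then two faces of the same invertibility statement on the finite-dimensional space $\mathbb{C}_{N\times N}$.

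The only genuinely non-routine point is the spectral claim for $\mathcal{L}$ — i.e.\ justifying that the eigenvalues of $\bI\otimes\bA-\bB^{\ST}\otimes\bI$ are exactly the differences of eigenvalues (or, in the triangularization approach, exhibiting an ordering of the $E_{ij}$ that triangularizes the conjugated operator). Everything else is bookkeeping. Since this is the classical result of Sylvester, I would keep the argument brief, cite \cite{Sylvester-1884} for the statement, and include only the short Kronecker-product computation as justification, rather than grinding through the triangularization indices.
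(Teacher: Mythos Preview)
Your argument is correct and is the standard proof of this classical result. However, the paper does not actually prove Proposition~\ref{prop-1}: it is simply stated as a ``well known result'' with a citation to \cite{Sylvester-1884}, and is then used as a black box in the proof of Proposition~\ref{prop-1-1}. So there is nothing to compare against; your Kronecker-product/Schur-triangularization sketch supplies a proof where the paper is content to quote the literature. If you intend to mirror the paper's treatment, you should likewise just state the proposition and cite \cite{Sylvester-1884}; if you prefer to include a proof, the version you outline is fine and self-contained.
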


Based on this proposition, we find the following.

\begin{Proposition}\label{prop-1-1}
Consider the Sylvester equation \eqref{SE-2a}, i.e.
\begin{equation}
 \bk \bM+ \bM \bk= \br{\bc}^{\ST}-g\bK^{-1}\br {\bc}^{\ST} \bK^{-1},
\label{SE-3a}
\end{equation}
where  the matrices $\bk$ and $\bK$ satisfy
\begin{subequations}
\begin{align}
&\mathcal{E}(\bk)\cap\mathcal{E}(-\bk)=\varnothing,\label{cond-kK-a}\\
& \mathcal{E}(g\bK^{-1})\cap\mathcal{E}(\bK)=\varnothing,
\label{cond-kK-b}
\end{align}
\label{cond-kK}
\end{subequations}
and the matrix relation \eqref{ell-relation-2}, i.e.
\begin{equation}
{\bk}^2=\bK+3e_1\bI+g{\bK}^{-1},~~ \bk \bK=\bK\bk.
\label{ec-mat}
\end{equation}
Then, the `dual' matrix equation
\begin{equation}
 \bK \bM- \bM \bK= \bk\,\br{\bc}^{\ST}-\br {\bc}^{\ST} \bk
\label{SE-3b}
\end{equation}
holds.
\end{Proposition}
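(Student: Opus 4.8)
The plan is to show that the ``defect'' matrix
\[
\boldsymbol{P}:=\bK\bM-\bM\bK-\bk\,\br{\bc}^{\ST}+\br{\bc}^{\ST}\bk
\]
satisfies a homogeneous Sylvester equation, so that Proposition \ref{prop-1} forces $\boldsymbol{P}=0$, which is exactly \eqref{SE-3b}.

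First I would produce a relation for the commutator $\bk^{2}\bM-\bM\bk^{2}$ out of \eqref{SE-3a}: multiply \eqref{SE-3a} on the left by $\bk$, then, separately, on the right by $\bk$, and subtract. Since $\bk$ commutes with $\bK$ (hence with $\bK^{-1}$), the two left-hand sides combine to $\bk^{2}\bM-\bM\bk^{2}$; replacing $\bk^{2}$ by $\bK+3e_{1}\bI+g\bK^{-1}$ via \eqref{ec-mat} and noting that the term $3e_{1}\bI$ drops out of the commutator, I expect to land on
\[
(\bK\bM-\bM\bK)+g(\bK^{-1}\bM-\bM\bK^{-1})=\bk\,\br{\bc}^{\ST}-\br{\bc}^{\ST}\bk-g\,\bk\bK^{-1}\br{\bc}^{\ST}\bK^{-1}+g\,\bK^{-1}\br{\bc}^{\ST}\bK^{-1}\bk.
\]
Regrouping, this says $\boldsymbol{P}+g\,\boldsymbol{Q}=0$, where
\[
\boldsymbol{Q}:=\bK^{-1}\bM-\bM\bK^{-1}+\bk\bK^{-1}\br{\bc}^{\ST}\bK^{-1}-\bK^{-1}\br{\bc}^{\ST}\bK^{-1}\bk
\]
has the same shape as $\boldsymbol{P}$ with $\bK\mapsto\bK^{-1}$ and the signs of the rank-one part reversed.

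Next I would record the purely algebraic identity $\bK\boldsymbol{Q}\bK=-\boldsymbol{P}$, which follows by conjugating $\boldsymbol{Q}$ by $\bK$ and using $\bK\bk\bK^{-1}=\bk=\bK^{-1}\bk\bK$. Feeding in $\boldsymbol{P}=-g\,\boldsymbol{Q}$ gives $\bK\boldsymbol{Q}\bK=g\,\boldsymbol{Q}$, equivalently (left-multiplying by $\bK^{-1}$) the homogeneous Sylvester equation $(g\bK^{-1})\,\boldsymbol{Q}-\boldsymbol{Q}\,\bK=0$. Hypothesis \eqref{cond-kK-b} states $\mathcal{E}(g\bK^{-1})\cap\mathcal{E}(\bK)=\varnothing$, so Proposition \ref{prop-1} (with vanishing right-hand side) gives $\boldsymbol{Q}=0$, whence $\boldsymbol{P}=-\bK\boldsymbol{Q}\bK=0$.

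I expect the only fiddly part to be the bookkeeping in the first step: multiplying by $\bk$ on the correct sides so that the commutator structure emerges, and checking that the four rank-one terms reorganize precisely into $\boldsymbol{P}$ and $\boldsymbol{Q}$; I do not anticipate a genuine obstacle. I would also remark that $\bK$ is invertible throughout (as $\bK^{-1}$ already appears in \eqref{SE-3a}), that condition \eqref{cond-kK-a} is needed only to guarantee that $\bM$ itself exists and is unique rather than in this deduction, and that when $g=0$ the displayed identity collapses directly to \eqref{SE-3b}.
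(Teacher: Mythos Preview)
Your proof is correct and follows essentially the same approach as the paper: compute $\bk^{2}\bM-\bM\bk^{2}$ from \eqref{SE-3a}, substitute $\bk^{2}=\bK+3e_{1}\bI+g\bK^{-1}$, and reduce to a homogeneous Sylvester equation $g\bK^{-1}\bW-\bW\bK=0$ (the paper's $\bW$ is your $\boldsymbol{P}$), whose only solution under \eqref{cond-kK-b} is zero by Proposition~\ref{prop-1}. The organisation differs only cosmetically---the paper left-multiplies \eqref{SE-3a} once and substitutes for $\bk\bM$ from \eqref{SE-3a} itself rather than left/right multiplying and subtracting, and it does not single out your auxiliary $\boldsymbol{Q}$---but the argument is the same.
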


\begin{proof}
Here we note that the condition \eqref{cond-kK-a} is necessary to guarantee the solvability of equation \eqref{SE-3a}
in light of proposition \ref{prop-1}.
Then, left multiplying $\bk$ to \eqref{SE-3a} yields
\[\bk^2 \bM +\bk \bM \bk= \bk  (\br{\bc}^{\ST}-g\bK^{-1}\br {\bc}^{\ST} \bK^{-1}).\]
Next, using the Sylvester equation \eqref{SE-3a} and replacing the term $\bk
\bM$ with $-\bM \bk+\br{\bc}^{\ST}-g\bK^{-1}\br {\bc}^{\ST} \bK^{-1}$,
one has
\begin{equation}\label{s-2}
\bk^2 \bM - \bM \bk^2 = - \br {\bc}^{\ST}\bk +\bk \br {\bc}^{\ST}
+g\bK^{-1}\br {\bc}^{\ST} \bK^{-1}\bk-g\bk\bK^{-1}\br {\bc}^{\ST} \bK^{-1}.
\end{equation}
Making use of the relation \eqref{ec-mat} to replace $\bk^2$ we find
\begin{equation}
   g\bK^{-1}(\bK \bM-\bM \bK-\bk \br{\bc}^{\ST}+\br{\bc}^{\ST} \bk)\bK^{-1} =\bK \bM-\bM \bK-\bk \br{\bc}^{\ST}+\br{\bc}^{\ST} \bk.
\end{equation}
This can be rewritten as a Sylvester equation
\begin{equation}\label{se-W-1}
    g\bK^{-1}\bW-\bW\bK=0,~~~
    \bW=\bK \bM-\bM \bK-\bk \br{\bc}^{\ST}+\br{\bc}^{\ST} \bk.
\end{equation}
Based on proposition \ref{prop-1} and noting that
$\mathcal{E}(g\bK^{-1})\cap\mathcal{E}(\bK)=\varnothing$ in \eqref{cond-kK-b},
the equation \eqref{se-W-1} has a unique solution $\bW=0$, which means
\eqref{SE-3b} holds.
\end{proof}

Here we note that the condition \eqref{cond-kK} is natural because it is actually the criteria \eqref{id-cond-2}
for selecting points from the elliptic curve \eqref{ec-sca}.
We also note that we cannot derive \eqref{SE-3a} from \eqref{SE-3b}.
In fact,  we start from \eqref{SE-3b}, replace $\bK$ using \eqref{ec-mat} and we get
\[\bk^2\bM-\bM \bk^2-g(\bK^{-1}\bM-\bM\bK^{-1})=\bk \br{\bc}^{\ST}-\br{\bc}^{\ST} \bk.
\]
At meantime, from \eqref{SE-3b} we also have
\[\bK^{-1}\bM-\bM\bK^{-1}=-\bK^{-1}\bk \br{\bc}^{\ST}\bK^{-1}+ \bK^{-1}\br{\bc}^{\ST} \bk \bK^{-1}.
\]
Combining them together and using the commutative relation $\bk\bK=\bK\bk$ we reach
\[\bk\bY-\bY\bk=0,~~\bY=\bk \bM+ \bM \bk- \br{\bc}^{\ST}+g\bK^{-1}\br {\bc}^{\ST} \bK^{-1}.\]
Obviously, $\bY=0$  is a solution to the above equation but it is not unique.
This means \eqref{SE-3a} and \eqref{SE-3b} are not equivalent.
\eqref{SE-3a} is more general and \eqref{SE-3b} is a by-product of the former.
In the following discussion it is sufficient that we only consider \eqref{SE-3a}.

\subsection{Solution to the Sylvester equation \eqref{SE-3a}}

\subsubsection{Canonical form of  \eqref{SE-3a}}\label{sec:3.2.1}

Using the similarity transformation \eqref{trans-sim} and denoting
\begin{equation}
\bM_1=\bT \bM \bT^{-1},~~\br_1=\bT
\br,~~{\bc}^{\ST}_1={\bc}^{\ST} \bT^{-1}, \label{Mrs-1}
\end{equation}
it  follows from \eqref{SE-3a} that
\[\bM_1 \bk_1 + \bk_1\bM_1= \br_1 \,{\bc}^{\ST}_1-g\bK^{-1}_1\br_1 {\bc}^{\ST}_1 \bK^{-1}_1,
\]
which is the same form as \eqref{SE-3a}.
This means when we solve the Sylvester equation  \eqref{SE-3a} we only need to consider the following canonical form
\begin{subequations}
\label{SE-cano}
\begin{equation}
\bM \Ga+ \Ga \bM = \br {\bc}^{\ST}-g\bK^{-1}\br {\bc}^{\ST} \bK^{-1},\label{SE-cano-1}
\end{equation}
together with \eqref{ell-relation-2-2}, i.e.
\begin{equation}
{\Ga}^2=\bK+3e_1\bI+g{\bK}^{-1},~~~~\Ga\bK=\bK\Ga , \label{SE-cano-2}
\end{equation}
where
\begin{equation}\br=(r_1,r_2,\cdots,r_N)^{\ST},~~ \bc=(c_1,c_2,\cdots,c_N)^{T}
\label{rc}
\end{equation}
\end{subequations}
and we suppose $\Ga$ is the canonical form of $\bk$.

\subsubsection{List of notations}\label{sec:3.2.2}

Based on Sec.\ref{sec:3.2.1} we only need to
consider three basic cases of $\Ga$:
\begin{subequations}\label{ga-cases}
\begin{align}
&\Ga^{\tyb{N}}_{\ty{D}}(\{k_j\}^{N}_{1})=\mathrm{Diag}(k_1, k_2, \cdots, k_N),~~(k_i^2\neq k_j^2, ~k_i\neq 0),\\
&\Ga^{\tyb{N}}_{\ty{J}}(k_1)
=\left(\begin{array}{cccccc}
k_1 & 0    & 0   & \cdots & 0   & 0 \\
1   & k_1  & 0   & \cdots & 0   & 0 \\
0   & 1  & k_1  & \cdots & 0   & 0 \\
\vdots &\vdots &\vdots &\vdots &\vdots &\vdots \\
0   & 0    & 0   & \cdots & 1   & k_1
\end{array}\right)=\bT^{\tyb{N}}[k_1],\\
 &\Ga^{\tyb{N}}_{\ty{G}}=\mathrm{Diag}\bigl(\Ga^{\tyb{N$_1$}}_{\ty{D}}(\{k_j\}^{N_1}_{1}),
\Ga^{\tyb{N$_2$}}_{\ty{J}}(k_{N_1+1}),\Ga^{\tyb{N$_3$}}_{\ty{J}}(k_{N_1+2}),\cdots,
\Ga^{\tyb{N$_s$}}_{\ty{J}}(k_{N_1+(s-1)})\bigr),
\label{Ga-g}
\end{align}
where $\sum^s_{j=1}N_j=N$.
\end{subequations}
The subscripts $_D$, $_J$ and $_G$ correspond to
the cases of $\Ga$ being diagonal, being of Jordan block and generic canonical form, respectively.
For convenience of the later discussions, let us collect some notations below.
\begin{subequations}\label{notations}
\begin{align}
& N\mathrm{\hbox{-}th~order~vector:}~~{\mbe^{\tyb{N}}}=(1,1,1, \cdots, 1)^{\ST},\\
& N\mathrm{\hbox{-}th~order~vector:}~~{\mathbf{e}_{1}^{\tyb{N}}}=(1,0,0, \cdots, 0)^{\ST},\label{IJ}\\
& N\mathrm{\hbox{-}th~order~vector:}~~g^{\tyb{N}}(a)=\Bigl(\frac{1}{a},\frac{-1}{a^2},\frac{1}{a^3}, \cdots,\frac{(-1)^{N-1}}{a^{N}}\Bigr)^{\ST},\\
& N\times N ~\mathrm{matrix:}~~\bG^{\tyb{N}}_{\ty{D}}(\{k_j\}^{N}_{1})
=(G_{i,j})_{N\times N},~~~G_{i,j}=\frac{1-g/{(K_iK_j)}}{k_i+k_j}.\\
& N\times N ~\mathrm{matrix:}~~\bH^{\tyb{N}}_{\ty{J}}(\{c_j\}^{N}_{1})
=\left(\begin{array}{ccccc}
c_1 & \cdots  & c_{N-2}  & c_{N-1} & c_N\\
c_2 & \cdots & c_{N-1}  & c_N & 0\\
c_3 &\cdots & c_N & 0 & 0\\
\vdots &\vdots & \vdots & \vdots & \vdots\\
c_N & \cdots & 0 & 0 & 0
\end{array}
\right).
\end{align}
\end{subequations}
$\bH^{\tyb{N}}_{\ty{J}}(\{c_j\}^{N}_{1})$ is called a skew LTT, which is also introduced in Appendix \ref{A:1}.

\subsubsection{Solutions to the matrix system \eqref{ell-relation-2-2} (i.e. \eqref{SE-cano-2})}
\label{sec:3.2.3}

We have discussed solutions to the matrix system \eqref{ell-relation-2-2} (i.e. \ref{SE-cano-2}) in Sec.\ref{sec:2.2}.
Here we list them using the notations given in Sec.\ref{sec:3.2.2}.

\begin{Proposition}\label{P:ellip}
The matrix system \eqref{ell-relation-2-2} admits the following three cases of solutions:\\
(1). Diagonal case:
\begin{equation}
\Ga=\Ga^{\tyb{N}}_{\ty{D}}(\{k_j\}^{N}_{1}),~~\bK=\Ga^{\tyb{N}}_{\ty{D}}(\{K_j\}^{N}_{1}),
\end{equation}
where
\begin{equation}\label{ell-relation-3}
  k_j^2=K_j+3e_1 +gK_j^{-1},~~j=1,2,\cdots,N.
\end{equation}
(2). Jordan block case:
\begin{equation}
\Ga=\Ga^{\tyb{N}}_{\ty{J}}(k_1),~~ \bK=\bT^{\tyb{N}}[K(k_1)].
\label{Ga-K-jordan}
\end{equation}
(3). Generic case:
\begin{subequations}
\label{Ga-gen-T}
\begin{align}
\Ga &=\Ga^{\tyb{N}}_{\ty{G}}, \\
\bK &= \mathrm{Diag}\bigl(\Ga^{\tyb{N$_1$}}_{\ty{D}}(\{K_j\}^{N_1}_{1}),
\bT^{\tyb{N$_2$}}[K(k_{N_1+1})],
\cdots,
\bT^{\tyb{N$_s$}}[K(k_{N_1+(s-1)}])\bigr).
\end{align}
\end{subequations}
\end{Proposition}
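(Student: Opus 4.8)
The three claims are of decreasing difficulty, and the plan is to settle the diagonal case by inspection, treat the Jordan block case as the real content, and then reduce the generic case to the first two by a block-diagonal argument. Throughout I will use the fact, recorded in Sec.~\ref{sec:2.2}, that the $N$-th order LTT matrices form the commutative algebra $\mathcal{T}^{\tyb{N}}$, together with the functional calculus $\bigl(\bT^{\tyb{N}}[f(k_0)]\bigr)^2=\bT^{\tyb{N}}[f^2(k_0)]$ and $\bigl(\bT^{\tyb{N}}[f(k_0)]\bigr)^{-1}=\bT^{\tyb{N}}[1/f(k_0)]$ from Proposition \ref{prop-A-3b}.

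For case (1): with $\Ga$ and $\bK$ both diagonal I would simply observe that $\Ga\bK=\bK\Ga$ is automatic and that the matrix identity in \eqref{ell-relation-2-2} holds entrywise, so that it reduces to the scalar relations \eqref{ell-relation-3}, which are exactly \eqref{ec-sca-i}. The only point to verify is the invertibility of $\bK$, i.e.\ $K_j\neq0$ for all $j$; this is built into the selection hypotheses (in the $\wp$-parametrization, $\kappa_j\neq\omega$).

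For case (2): I would first argue that $\bK=\bT^{\tyb{N}}[K(k_1)]$ is well defined, and then transport a scalar identity through the LTT calculus. Since $K_1^2\neq g$ (equivalently $\wp(\kappa_1)\neq\sqrt g+e_1$), the derivative $\partial_K\bigl(K+3e_1+g/K\bigr)=1-g/K^2$ is nonzero at $(k_1,K_1)$, so the implicit function theorem makes the curve \eqref{ec-sca} define an analytic branch $K=K(k)$ near $k=k_1$ with $K(k_1)=K_1\neq0$; hence $\bT^{\tyb{N}}[K(k_1)]$ exists and, by Proposition \ref{prop-A-3b}, is invertible with inverse $\bT^{\tyb{N}}[1/K(k_1)]$. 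Because $\Ga=\bT^{\tyb{N}}[k_1]$ lies in $\mathcal{T}^{\tyb{N}}$, $\Ga\bK=\bK\Ga$ is immediate. For the curve relation I would use that $\Ga^2=\bT^{\tyb{N}}[k^2]|_{k_1}$, that $3e_1\bI=\bT^{\tyb{N}}[3e_1]$ is the generator of the constant function, and that $f\mapsto\bT^{\tyb{N}}[f(k_1)]$ is additive, so \eqref{ell-relation-2-2} is precisely the image under this map of the scalar identity $k^2=K(k)+3e_1+g/K(k)$, which holds identically in $k$ near $k_1$ by construction of $K(k)$; unpacking this coefficient by coefficient is the computation \eqref{de-j}. (That $\Ga\bK=\bK\Ga$ already forces $\bK\in\mathcal{T}^{\tyb{N}}$ is Proposition \ref{prop-A-1}, which is what makes this the only admissible $\bK$.)

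For case (3): I would write $\Ga^{\tyb{N}}_{\ty{G}}=\mathrm{Diag}(B_1,\dots,B_s)$ as in \eqref{Ga-g} and the proposed $\bK$ as $\mathrm{Diag}(C_1,\dots,C_s)$ with blocks matched up. Sums, products and inverses of block-diagonal matrices are computed blockwise, so \eqref{ell-relation-2-2} for the pair $(\Ga,\bK)$ is equivalent to $B_\ell^2=C_\ell+3e_1\bI+gC_\ell^{-1}$ and $B_\ell C_\ell=C_\ell B_\ell$ for every $\ell$, and each block is either a $1\times1$ instance of case (1) or a Jordan block instance of case (2), both already handled. This reduces (3) to the previous cases and finishes the verification. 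The one spot that needs genuine care is case (2) --- justifying that $K(k)$ is analytic at $k_1$ and that $\bK$ is invertible, which is exactly where the selection conditions $K_1^2\neq g$ and $K_1\neq0$ are used; once that is in place, everything else is a formal consequence of the homomorphism properties of the LTT calculus.
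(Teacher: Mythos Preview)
Your proposal is correct and follows essentially the same route as the paper: the paper's own argument (given in Sec.~\ref{sec:2.2} and merely summarized at Proposition~\ref{P:ellip}) handles the diagonal case by inspection, treats the Jordan block case by invoking Proposition~\ref{prop-A-1} to force $\bK\in\mathcal{T}^{\tyb{N}}$ and then differentiating the scalar curve relation to obtain \eqref{de-j} via the LTT calculus of Proposition~\ref{prop-A-3b}, and leaves the generic case implicit as a block-diagonal assembly. Your write-up is in fact slightly more careful than the paper's, in that you make explicit the implicit-function-theorem step needed to guarantee that $K(k)$ is analytic near $k_1$ (and hence that $\bT^{\tyb{N}}[K(k_1)]$ is defined and invertible), and you spell out the block-diagonal reduction for case~(3); but these are elaborations of the same argument rather than a different one.
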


\subsubsection{Solutions to  \eqref{SE-cano-1}}

Now let us come to the solutions to the Sylvester equation \eqref{SE-cano-1}.

\vskip 5pt
\noindent
\textit{Case 1.~} $\Ga=\Ga^{\tyb{N}}_{\ty{D}}(\{k_j\}^{N}_{1})$.

Solution to \eqref{SE-cano-1} is given by
\begin{subequations}
\begin{align}
\bM &
=\bF \bG^{\tyb{N}}_{\ty{D}}(\{k_j\}^{N}_{1})\bH
=\Bigl(\frac{1-g/{(K_iK_j)}}{k_i+k_j}\,r_i c_j\Bigr)_{N\times N},
\end{align}
where
\begin{equation}
\bF=\mathrm{Diag}(r_1,r_2,\cdots, r_N),~~\bH=\mathrm{Diag}(c_1,c_2,\cdots, c_N).
\end{equation}
\label{sol-diag}
\end{subequations}

\vskip 5pt
\noindent \textit{Case 2.~}
$\Ga=\Ga^{\tyb{N}}_{\ty{J}}(k_1)$.

This is also referred to as the Jordan block case.
In this case, $\Ga$ and $\bK$ take \eqref{Ga-K-jordan}.
We suppose
\begin{equation}\label{factor-M}
\bM=\bF  \bG  \bH,
\end{equation}
where
\begin{equation}
\bF=\bT^{\tyb{N}}(\{r_j\}^{N}_{1}),~~\bH=\bH^{\tyb{N}}(\{c_j\}^{N}_{1}),
\end{equation}
and  $\bG$ is a $N\times N$ unknown matrix.
$\br$ and $\bc$ can be  expressed through $\bF$ and $\bH$ as
\begin{equation}
\br =\bF\,\mbe_1,~~~{\bc}=\bH\,\mbe_1,
\end{equation}
where $\mbe_1=\mbe_1^{\tyb{N}}$ is defined in \eqref{IJ}.
Then, the equation \eqref{SE-cano-1} is written as
\begin{equation}
\bF \bG\bH \Ga  +\Ga  \bF \bG\bH
=\bF\,\mbe_1\, \mbe_1^{\ST}\,
\bH-g\bK^{-1}\bF\,
\mbe_1\, \mbe_1^{\ST}\, \bH\bK^{-1}.
\end{equation}
Further, thanks to proposition \ref{prop-A-5}, one has
\begin{equation}
\bF \bG {\Ga}^{\ST} \bH + \bF \Ga \bG\bH
=\bF\, \mbe_1\, \mbe_1^{\ST}\,
\bH-g\bF\bK^{-1}\,\mbe_1\, \mbe_1^{\ST}\, {\bK^{-1}}^{\ST}\bH,
\end{equation}
and then
\begin{equation}
\bG {\Ga}^{\ST}  +  \Ga \bG =\mbe_1 \, \mbe_1^{\ST}-g\bK^{-1}\,\mbe_1 \, \mbe_1^{\ST}\, {\bK^{-1}}^{\ST}. \label{G-J-eq}
\end{equation}
To solve it, we set
\begin{equation}
\bG=(G_1,G_2,\cdots,G_N) \label{G-J-c-1}
\end{equation}
with column vectors $\{G_j\}$.
\eqref{G-J-eq} is then written into the following equation set,
\begin{subequations}
\begin{align}
& (k_1\bI+ \Ga) G_1= \mbe_1-\frac{g}{ K_1}A_1,\\
& (k_1\bI+ \Ga) G_{j+1}+G_{j}=-\frac{g}{j!} (\partial_{k_1}^j\frac{1}{K_1})A_1,~~ ~(j=1,2,\cdots, N-1),
\end{align}
where $K_1=K(k_1)$ and
\begin{equation}
A_1=(a_1,a_2,\cdots,a_N)^{\ST},~~a_j=\frac{1}{(j-1)!}\partial_{k_1}^{j-1}\frac{1}{K_1},~~(j=1,2,\cdots, N).
\end{equation}
\end{subequations}
The above is solved by
\begin{align}
 G_{j}& =\frac{\partial_a^{j-1} g^{\tyb{N}}(a)|_{a=2k_1}}{(j-1)!}+g\sum^{j}_{i=1}\frac{(-1)^i}{(j-i)!}
 (\partial_{k_1}^{j-i}\frac{1}{K_1}){\Ga^{\tyb{N}}_{\ty{J}}(2k_1)}^{-i}\, A_1,
~~(j=1,2,\cdots,N).
\label{g-C2-2}
\end{align}

\vskip 5pt
\noindent
\textit{Case 3.~} $\Ga=\Ga^{\tyb{N}}_{\ty{G}}$.

In this case, we
still suppose the factorization \eqref{factor-M}, where
\begin{subequations}
\begin{align}
&\bF=\mathrm{Diag}\bigl(
\Ga^{\tyb{N$_1$}}_{\ty{D}}(\{k_j\}^{N_1}_{1}),
\bT^{\tyb{N$_2$}}(\{r_j\}^{N_1+N_2}_{N_1+1}),
\cdots,
\bT^{\tyb{N$_s$}}(\{r_j\}^{N_1+N_2+\cdots+N_s}_{N_1+N_2+\cdots+N_{s-1}+1})
\bigr),\\
&\bH=\mathrm{Diag}\bigl(
\bH^{\tyb{N$_1$}}_{\ty{D}}(\{c_j\}^{N_1}_{1}),
\bH^{\tyb{N$_2$}}_{\ty{J}}(\{c_j\}^{N_1+N_2}_{N_1+1}),
\cdots,
\bH^{\tyb{N$_s$}}_{\ty{J}}(\{c_j\}^{N_1+N_2+\cdots+N_s}_{N_1+N_2+\cdots+N_{s-1}+1})\bigr),
\end{align}
$\bG$ is a symmetric matrix with block structure
\begin{equation}
\bG=\bG^{\ST}=(\bG_{i,j})_{s\times s}
\end{equation}
\end{subequations}
and each $\bG_{i,j}$ is a $N_i\times N_j$ matrix.
It is not difficult to find $\bG_{i,j}(=\bG_{j,i}^T)$ are given by
\begin{subequations}\label{gene-sol}
\begin{align}
&\bG_{1,1}=\bG^{\tyb{N}}_{\ty{D}}(\{k_j\}^{N}_{1}),\label{gene-sol-1}\\
& \bG_{i,j}=(G_{i1},G_{i2},\cdots,G_{iN_j}),~~(1<i\leq j\leq s),
\end{align}
\end{subequations}
with
\begin{align*}
&G_{1l}=(-1)^{l-1}\Bigl[\bI
-g\sum^{l}_{i=1}\frac{(-1)^{l-i}}{(l-i)!}\Bigl(\partial_{k}^{l-i}\frac{1}{K(k)}\Bigr)\Bigr|_{k=k_{N_1+1}}
(\Ga_{\ty{D}}^{\tyb{N$_1$}}(\{K_j\}^{N_1}_{1}))^{-1} \Bigr]
\Ga_{\ty{D}}^{\tyb{N$_1$}}(\{\alpha_j\})\mbe^{\tyb{N$_1$}},\\
&G_{il}=\frac{\partial_{\beta_{ij}}^{l-1} g^{\tyb{N$_i$}}(\beta_{ij})}{(l-1)!}\\
&~~~~~~~+g\sum^{l}_{m=1}\frac{(-1)^m}{(l-m)!}\Bigl(\partial_{k}^{l-m}\frac{1}{K(k)}\Bigr)(\Ga_{\ty{J}}^{\tyb{N$_i$}}(\beta_{ij}))^{-m}
\bT^{\tyb{N$_i$}}(\frac{1}{K(k)}) \mbe_1^{\tyb{N$_i$}}\Bigr|_{k=k_{N_1+(j-1)}}
\end{align*}
for $1<i\leq j\leq s$, where
$\alpha_j=\frac{1}{k_j +k_{N_1+(j-1)}}$, $\beta_{ij}=k_{N_1+(i-1)}+k_{N_1+(j-1)}$
and $k$ and $K(k)$ satisfy the elliptic curve \eqref{ec-sca}, i.e.,
$k^2=K^2(k)+3e_1+g/K(k)$.

\subsection{Infinite matrix $\bS$}
\subsubsection{Recurrence relation of $\S{i,j}$}

Let us go back to the Sylvester equation \eqref{SE-3a} and the matrix relation \eqref{ec-mat}.
Using the elements $\{\bM,\bk,\bK,\br,\bc\}$ in  \eqref{SE-3a} and  \eqref{ec-mat},
we introduce an $\infty\times\infty$ matrix $\bS=(S^{(i,j)})_{\infty\times\infty},~i,j\in \mathbb{Z},$
where the elements $S^{(i,j)}$ are defined as(cf. \cite{NP-JNMP-2003})
\begin{subequations}\label{Sij}
\begin{eqnarray}
&&S^{(2i,2j)}= {\bc}^{\ST} \,\bK^j(\bI+ \bM)^{-1} \bK^i \br,\label{Sij-a}\\
&& S^{(2i+1,2j)}= {\bc}^{\ST} \,\bK^j(\bI+ \bM)^{-1} \bk \bK^i \br,\label{Sij-b}\\
&&S^{(2i,2j+1)}= {\bc}^{\ST} \,\bK^j\bk (\bI+ \bM)^{-1} \bK^i \br,\label{Sij-c}\\
&& S^{(2i+1,2j+1)}= {\bc}^{\ST} \,\bK^j\bk (\bI+ \bM)^{-1} \bk \bK^i
\br.\label{Sij-d}
\end{eqnarray}
\end{subequations}

For these elements we have the following.

\begin{Proposition}\label{prop-2}
For the saclar functions $S^{(i,j)}$ defined in \eqref{Sij} with $(\bM, \bK, \bk, \br,\bc)$
satisfying the Sylvester equation \eqref{SE-3a} and the matrix relation \eqref{ec-mat},
we have the following relations,
\begin{subequations}
\begin{eqnarray}
&&\S{i,j+2s}=\S{i+2s,j}-\sum^{s-1}_{l=0}(\S{2s-2l-1,j}\S{i,2l}- \S{2s-2l-2,j}\S{i,2l+1}),\label{Sij-k+}\\
&&\S{i,j-2s}=\S{i-2s,j}+\sum^{s}_{l=1}(\S{-2s+2l-1,j}\S{i,-2l}- \S{2l-2s-2,j}\S{i,-2l+1}),\label{Sij-k-}
\end{eqnarray}
\end{subequations}
where $s=1,2,\cdots$. In particular, when $s=1,$ one has
\begin{subequations}\label{Sij-k1}
\begin{eqnarray}
&&\S{i,j+2}=\S{i+2,j}-\S{1,j}\S{i,0}+\S{0,j}\S{i,1},\label{Sij-k+1}\\
&&\S{i,j-2}=\S{i-2,j}+\S{-1,j}\S{i,-2}-\S{-2,j}\S{i,-1}.\label{Sij-k-1}
\end{eqnarray}
\end{subequations}
\end{Proposition}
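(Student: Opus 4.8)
The plan is to recast every $\S{i,j}$ as one and the same bilinear form and then to expand $P\bK^{s}-\bK^{s}P$, where throughout we write $P$ for $(\bI+\bM)^{-1}$, by a telescoping sum produced by the dual equation \eqref{SE-3b}. First I would introduce, for $m\in\mathbb{Z}$, the column vectors $\br^{(2m)}:=\bK^{m}\br$, $\br^{(2m+1)}:=\bk\bK^{m}\br$, and the row vectors $(\bc^{(2m)})^{\ST}:=\bc^{\ST}\bK^{m}$, $(\bc^{(2m+1)})^{\ST}:=\bc^{\ST}\bk\bK^{m}$. Because $\bk\bK=\bK\bk$ and $\bK$ is invertible (both from \eqref{ec-mat}), the four cases of \eqref{Sij} collapse into the single expression $\S{i,j}=(\bc^{(j)})^{\ST}P\,\br^{(i)}$ valid for all $i,j\in\mathbb{Z}$, and one has the shift relations $\bK^{s}\br^{(i)}=\br^{(i+2s)}$ and $(\bc^{(j)})^{\ST}\bK^{s}=(\bc^{(j+2s)})^{\ST}$ for every $s\in\mathbb{Z}$.

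Next I would use the dual Sylvester equation \eqref{SE-3b} of Proposition~\ref{prop-1-1}, namely $\bK\bM-\bM\bK=\bk\br\bc^{\ST}-\br\bc^{\ST}\bk$, to build the telescoping identity
\[
\bK^{s}\bM-\bM\bK^{s}=\sum_{l=0}^{s-1}\bK^{s-1-l}(\bK\bM-\bM\bK)\bK^{l}=\sum_{l=0}^{s-1}\bK^{s-1-l}\,(\bk\br\bc^{\ST}-\br\bc^{\ST}\bk)\,\bK^{l},\qquad s\ge1,
\]
and then left-multiply it by $(\bc^{(j)})^{\ST}P$ and right-multiply by $P\br^{(i)}$. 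Writing $\bK^{s}\bM-\bM\bK^{s}=\bK^{s}(\bI+\bM)-(\bI+\bM)\bK^{s}$ and using $P(\bI+\bM)=(\bI+\bM)P=\bI$, the left-hand side becomes $(\bc^{(j)})^{\ST}\bigl(P\bK^{s}-\bK^{s}P\bigr)\br^{(i)}=\S{i+2s,j}-\S{i,j+2s}$ by the shift relations. On the right-hand side each summand splits into a product of two scalars, since $\bk\br\bc^{\ST}$ and $\br\bc^{\ST}\bk$ are rank one; commuting the relevant power of $\bK$ past $\bk$ and reading off the definitions \eqref{Sij}, the $l$-th summand is exactly $\S{2s-2l-1,j}\,\S{i,2l}-\S{2s-2l-2,j}\,\S{i,2l+1}$. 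Equating the two sides and rearranging gives \eqref{Sij-k+}.

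The relation \eqref{Sij-k-} follows by the same argument with $\bK^{s}$ replaced by $\bK^{-s}$: one telescopes $\bK^{-s}\bM-\bM\bK^{-s}=\sum_{l=0}^{s-1}\bK^{-l}(\bK^{-1}\bM-\bM\bK^{-1})\bK^{-(s-1-l)}$ and substitutes $\bK^{-1}\bM-\bM\bK^{-1}=-\bK^{-1}(\bk\br\bc^{\ST}-\br\bc^{\ST}\bk)\bK^{-1}$, the extra minus sign being the origin of the $+$ sign in front of the sum in \eqref{Sij-k-}; the identical sandwiching then produces \eqref{Sij-k-}, and \eqref{Sij-k1} is merely the case $s=1$ of the two. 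The computation itself is short. The one place that needs care is the parity bookkeeping: checking that the uniform bilinear form really absorbs all four lines of \eqref{Sij}, and that once $\bk$ has been commuted through powers of $\bK$ the rank-one pieces of the telescoped sum reproduce precisely the index pattern $(2s-2l-1,2l)$ and $(2s-2l-2,2l+1)$ --- respectively their negatives --- with no stray terms. This is exactly the point at which the commutativity $\bk\bK=\bK\bk$ and the invertibility of $\bK$ from \eqref{ec-mat} are indispensable.
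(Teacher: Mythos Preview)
Your argument is correct and is essentially the paper's proof, only more compactly organised. The paper likewise telescopes the dual equation \eqref{SE-3b} to obtain $\bK^{s}\bM-\bM\bK^{s}=\sum_{l=0}^{s-1}\bK^{s-1-l}(\bk\br\bc^{\ST}-\br\bc^{\ST}\bk)\bK^{l}$ and its $\bK^{-s}$ analogue, then passes to $\S{i,j}$ by introducing the auxiliary vectors $\bu^{(i)}=(\bI+\bM)^{-1}\br^{(i)}$ (your $P\br^{(i)}$), deriving shift relations for $\bu^{(i)}$ and left-multiplying by $\bc^{\ST}\bK^{j}(\bI+\bM)^{-1}$ or $\bc^{\ST}\bK^{j}\bk(\bI+\bM)^{-1}$; this forces it to treat the four parity combinations of $(i,j)$ separately and merge them at the end. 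Your uniform encoding $\S{i,j}=(\bc^{(j)})^{\ST}P\,\br^{(i)}$ together with the observation $P(\bK^{s}\bM-\bM\bK^{s})P=P\bK^{s}-\bK^{s}P$ collapses all four parity cases into a single sandwiching, which is cleaner but uses exactly the same ingredients.
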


\begin{proof}
First, from the Sylvester equation \eqref{SE-3a} we have the following relation
\begin{equation}
\bk^s \bM -(-1)^{s} \bM \bk^s=\sum^{s-1}_{j=0}(-1)^j\bk^{s-j-1}( \br
{\bc}^{\ST}-g\bK^{-1}\br {\bc}^{\ST} \bK^{-1}) \bk^j,~~~(s=1,2,\cdots).
\label{Mk-rec}
\end{equation}
In fact, the Sylvester equation \eqref{SE-3a} itself is the case of  $s=1$ of \eqref{Mk-rec}, while
\eqref{s-2} is the case of $s=2$.
Making use of  mathematical inductive approach we can reach \eqref{Mk-rec}.
Similarly, from \eqref{SE-3b} one has a parallel result
\begin{subequations}
\begin{equation}
\bK^s \bM - \bM \bK^s=\sum^{s-1}_{j=0}\bK^{s-j-1}( \bk\br
{\bc}^{\ST}-\br {\bc}^{\ST} \bk) \bK^j, \label{MK-rec}
\end{equation}
or
\begin{equation}
\bM\bK^{-s} - \bK^{-s}\bM=\sum^{s}_{j=1}\bK^{-(s-j+1)}( \bk\br
{\bc}^{\ST}-\br {\bc}^{\ST} \bk) \bK^{-j},~~~(s=1,2,\cdots). \label{MK-rec--}
\end{equation}
\end{subequations}
Now let us prove the relation \eqref{Sij-k+}. We introduce the auxiliary vectors
\begin{equation} \label{ui}
\bu^{(2i)}=(\bI+\bM)^{-1} \bK^i \br,~~\bu^{(2i+1)}=(\bI+\bM)^{-1}\bk
\bK^i \br,~~ i\in \mathbb{Z}.
\end{equation}
From this we immediately have
\begin{subequations}
\begin{eqnarray}
&&\bK^s \bu^{(2i)} + \bK^s \bM \bu^{(2i)}=\bK^{s+i} \br,\\
&&\bK^s \bu^{(2i+1)} + \bK^s \bM \bu^{(2i+1)}=\bk \bK^{s+i} \br.
\end{eqnarray}
\end{subequations}
Replacing $\bK^s \bM$ using the relation \eqref{MK-rec}, one finds
\begin{subequations}\label{re-u+}
\begin{eqnarray}
&&(\bI+\bM)\bK^{s} \bu^{(2i)}=\bK^{s+i} \br
-\sum^{s-1}_{l=0} \bK^{s-l-1} (\bk\br {\bc}^{\ST}-\br {\bc}^{\ST} \bk) \bK^l \bu^{(2i)},\label{re-u+1}\\
&&(\bI+\bM)\bK^{s} \bu^{(2i+1)}=\bk \bK^{s+i} \br -\sum^{s-1}_{l=0}
\bK^{s-l-1} (\bk\br {\bc}^{\ST}-\br {\bc}^{\ST} \bk) \bK^l \bu^{(2i+1)}.\label{re-u+2}
\end{eqnarray}
\end{subequations}
These relations, left-multiplied  by ${\bc}^{\ST}\bK^j(\bI+\bM)^{-1}$, yield
\begin{align*}
&\S{2i,2j+2s}=\S{2i+2s,2j}-\sum^{s-1}_{l=0}(\S{2s-2l- 1,2j}\S{2i,2l}-\S{2s-2l-2,2j}\S{2i,2l+1}), \\
&\S{2i+1,2j+2s}=\S{2i+2s+1,2j}-\sum^{s-1}_{l=0}(\S{2s-2l-1,2j}\S{2i+1,2l}-\S{2s-2l-2,2j}\S{2i+1,2l+1}),
\end{align*}
and
 left-multiplied  by  ${\bc}^{\ST}\bK^j\bk (\bI+\bM)^{-1}$,  yield
\begin{align*}
&\S{2i,2j+2s+1}=\S{2i+2s,2j+1}-\sum^{s-1}_{l=0}(\S{2s-2l-1,2j+1}\S{2i,2l}-\S{2s-2l-2,2j+1}\S{2i,2l+1}),\nonumber\\
&\S{2i+1,2j+2s+1}=\S{2i+2s+1,2j+1}-\sum^{s-1}_{l=0}(\S{2s-2l-1,2j+1}\S{2i+1,2l}-\S{2s-2l-2,2j+1}\S{2i+1,2l+1}).
\end{align*}
They are merged to the relation \eqref{Sij-k+}.

The relation \eqref{Sij-k-} can be proved in a similar procedure,
in which we use the following counterpart of \eqref{re-u+}:
\begin{subequations}\label{re-u-}
\begin{eqnarray}
&&(\bI+\bM)\bK^{-s} \bu^{(2i)}=\bK^{-s+i} \br
+\sum^{s}_{l=1} \bK^{-(s-l+1)} (\bk\br {\bc}^{\ST}-\br {\bc}^{\ST} \bk) \bK^{-l} \bu^{(2i)},\\
&&(\bI+\bM)\bK^{-s} \bu^{(2i+1)}=\bk \bK^{-s+i} \br+\sum^{s}_{l=1}
\bK^{-(s-l+1)} (\bk\br {\bc}^{\ST}-\br {\bc}^{\ST} \bk) \bK^{-l} \bu^{(2i+1)},
\end{eqnarray}
\end{subequations}
with $s=1,2,\cdots$.
We note that \eqref{Sij-k1} are corresponding to the algebraic relations (2.13) in Ref.\cite{NP-JNMP-2003}.
\end{proof}

\subsubsection{Invariance and symmetry property of $\S{i,j}$ }\label{sec:2.3.2}

The matrix $\bS$ (or the element $\S{i,j}$) keeps invariant under the similarity transformation \eqref{trans-sim}.

We have shown that the matrix relation \eqref{ell-relation-2} and the Sylvester equation \eqref{SE-3a}
keep invariant formally in terms of the similarity transformation \eqref{trans-sim} and notations \eqref{Mrs-1}.
Using \eqref{trans-sim} and \eqref{Mrs-1} one can rewrite  \eqref{Sij} and find
\begin{subequations}\label{Sij-sim}
\begin{eqnarray}
&&S^{(2i,2j)}= {\bc}^{\ST}_1 \,\bK_1^j(\bI+ \bM_1)^{-1} \bK_1^i \br_1,\label{Sij-sim-a}\\
&& S^{(2i+1,2j)}= {\bc}^{\ST}_1 \,\bK_1^j(\bI+ \bM_1)^{-1} \bk_1 \bK_1^i \br_1,\label{Sij-sim-b}\\
&&S^{(2i,2j+1)}= {\bc}^{\ST}_1 \,\bK_1^j\bk_1 (\bI+ \bM_1)^{-1} \bK_1^i \br_1,\label{Sij-sim-c}\\
&& S^{(2i+1,2j+1)}= {\bc}^{\ST} _1\,\bK_1^j\bk_1 (\bI+ \bM_1)^{-1} \bk_1 \bK_1^i \br_1.\label{Sij-sim-d}
\end{eqnarray}
\end{subequations}

In addition, we have the following symmetry property.

\begin{Proposition}\label{prop-sym}
Suppose that $ \bM, \bK,\bk, \br, \bs$ satisfy the Sylvester equation \eqref{SE-3a} together with the matrix system \eqref{ec-mat} in which
$\mathcal{E}(\bk)\cap \mathcal{E}(-\bk)=\varnothing$ and $\mathcal{E}(g\bK^{-1})\cap \mathcal{E}(\bK)=\varnothing$.
Then the
scalar elements $\S{i,j}$ defined by \eqref{Sij} satisfy the
symmetry property
\begin{equation}\label{Sij=Sji}
\S{i,j}=\S{j,i},
\end{equation}
i.e. the infinite matrix $\bS$ is symmetric.
\end{Proposition}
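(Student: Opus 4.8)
The plan is to produce a single \emph{symmetric} matrix $\boldsymbol{V}$ that intertwines the data $(\bk,\bK,\bM,\br,\bc)$ of \eqref{SE-3a}--\eqref{ec-mat} with the transposed data $(\bk^{\ST},\bK^{\ST},\bM^{\ST},\bc,\br)$, and then to read off \eqref{Sij=Sji} from the trivial fact that every $\S{i,j}$, being a $1\times1$ quantity, equals its own transpose. Since by Sec.~\ref{sec:2.3.2} the scalars $\S{i,j}$ are unchanged under the similarity transformation \eqref{trans-sim}--\eqref{Mrs-1}, and since an intertwiner $\boldsymbol{V}$ for a conjugated system produces the intertwiner $\bT_0^{\ST}\boldsymbol{V}\bT_0$ (again symmetric) for the original one, it is enough to construct $\boldsymbol{V}$ when $\bk=\Ga$ is in one of the three canonical forms \eqref{ga-cases}, where the factorization $\bM=\bF\bG\bH$ of Sec.~\ref{sec-3} with $\bG=\bG^{\ST}$ and $\br=\bF\mbe_1$, $\bc=\bH\mbe_1$ (with $\mbe$ in place of $\mbe_1$ in the diagonal case, and the obvious block concatenation in the generic case) is available.

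In each canonical case I would take $\boldsymbol{V}:=\bH\bF^{-1}$ — concretely $\boldsymbol{V}=\mathrm{Diag}(c_1/r_1,\dots,c_N/r_N)$ in the diagonal case and $\boldsymbol{V}=\boldsymbol{E}\boldsymbol{Q}$ (the exchange matrix times an LTT/block-LTT factor) in the Jordan block and generic cases. All the needed properties of $\boldsymbol{V}$ then follow from $\bH^{\ST}=\bH$, $\bG^{\ST}=\bG$ and the ``flip'' identity $\bH\bA=\bA^{\ST}\bH$ valid for an LTT block $\bA$ (Appendix~\ref{A:1}), together with $\bF$ being LTT and commuting with $\bk,\bK$ in each block: indeed $\boldsymbol{V}^{\ST}=(\bF^{\ST})^{-1}\bH=\bH\bF^{-1}=\boldsymbol{V}$, $\boldsymbol{V}\bk=\bH\bk\bF^{-1}=\bk^{\ST}\bH\bF^{-1}=\bk^{\ST}\boldsymbol{V}$, likewise $\boldsymbol{V}\bK=\bK^{\ST}\boldsymbol{V}$, and $\boldsymbol{V}\br=\bH\bF^{-1}\bF\mbe_1=\bH\mbe_1=\bc$, hence $\br^{\ST}\boldsymbol{V}=\bc^{\ST}$. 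Finally $\boldsymbol{V}\bM=\bH\bF^{-1}\bF\bG\bH=\bH\bG\bH=(\bF\bG\bH)^{\ST}\bH\bF^{-1}=\bM^{\ST}\boldsymbol{V}$, again using $\bF^{\ST}\bH=\bH\bF$; alternatively, $\boldsymbol{V}\bM-\bM^{\ST}\boldsymbol{V}$ is seen — from the three relations just proved and the Sylvester equation \eqref{SE-3a} — to solve the homogeneous equation $\bk^{\ST}\boldsymbol{X}+\boldsymbol{X}\bk=0$, whose only solution is $0$ by Proposition~\ref{prop-1} and \eqref{cond-kK-a}. (As a side remark, $\boldsymbol{V}\bK=\bK^{\ST}\boldsymbol{V}$ is in fact a formal consequence of $\boldsymbol{V}\bk=\bk^{\ST}\boldsymbol{V}$, the curve relation \eqref{ec-mat} and \eqref{cond-kK-b}, by the same uniqueness argument.)

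With such a $\boldsymbol{V}$ the conclusion is immediate. From $\boldsymbol{V}\bM=\bM^{\ST}\boldsymbol{V}$ one gets $\boldsymbol{V}(\bI+\bM)^{-1}=(\bI+\bM^{\ST})^{-1}\boldsymbol{V}$, and then, for the even--even entry,
\[
S^{(2i,2j)}=\bc^{\ST}\bK^{j}(\bI+\bM)^{-1}\bK^{i}\br
=\br^{\ST}(\bK^{\ST})^{j}(\bI+\bM^{\ST})^{-1}(\bK^{\ST})^{i}\,\boldsymbol{V}\br
=\br^{\ST}(\bK^{\ST})^{j}(\bI+\bM^{\ST})^{-1}(\bK^{\ST})^{i}\bc ,
\]
and transposing this scalar gives $\bc^{\ST}\bK^{i}(\bI+\bM)^{-1}\bK^{j}\br=S^{(2j,2i)}$. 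The odd--even, even--odd and odd--odd sectors \eqref{Sij-b}--\eqref{Sij-d} are treated verbatim, the only additional step being that the stray factor $\bk$ is carried through $\boldsymbol{V}$ by $\boldsymbol{V}\bk=\bk^{\ST}\boldsymbol{V}$. This establishes \eqref{Sij=Sji}, i.e. $\bS=\bS^{\ST}$.

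I expect the only genuinely delicate point to be the construction of $\boldsymbol{V}$ in the \emph{generic} case \eqref{Ga-g}: there $\bF$, $\bH$ and $\boldsymbol{V}$ are block matrices whose blocks are of diagonal or (skew-)LTT type, and one must apply the correct flip identity from Appendix~\ref{A:1} block by block when checking $\boldsymbol{V}^{\ST}=\boldsymbol{V}$, $\boldsymbol{V}\bk=\bk^{\ST}\boldsymbol{V}$ and $\boldsymbol{V}\bK=\bK^{\ST}\boldsymbol{V}$; the subsequent scalar manipulations are routine. A minor caveat is that the factorization presupposes $\bF$ invertible, i.e. the leading $\br$-data nonzero; any configuration compatible with \eqref{cond-kK} that violates this is handled by the usual reduction of $N$ or by a continuity argument, since each $\S{i,j}$ is a rational function of the entries of $(\br,\bc)$.
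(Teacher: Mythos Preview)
Your argument is correct and is essentially the approach the paper has in mind: the paper does not write out a proof but refers to Proposition~\ref{prop-A-5} (the flip identity $\bH\bA=\bA^{\ST}\bH$ for LTT/STT blocks) and to the procedure in \cite{XZZ-2014-JNMP}; your construction of the symmetric intertwiner $\boldsymbol{V}=\bH\bF^{-1}$ and the verification of $\boldsymbol{V}^{\ST}=\boldsymbol{V}$, $\boldsymbol{V}\bk=\bk^{\ST}\boldsymbol{V}$, $\boldsymbol{V}\bK=\bK^{\ST}\boldsymbol{V}$, $\boldsymbol{V}\br=\bc$, $\boldsymbol{V}\bM=\bM^{\ST}\boldsymbol{V}$ is precisely that procedure, carried out blockwise via Proposition~\ref{prop-A-5}. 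The subsequent scalar manipulation is the standard transpose trick, and your Sylvester-uniqueness alternative for $\boldsymbol{V}\bM=\bM^{\ST}\boldsymbol{V}$ is a clean touch.
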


This can be proved using the property given in proposition \ref{prop-A-5} and following the procedure
described in Appendix C in Ref.\cite{XZZ-2014-JNMP}. Here we skip the proof.

Hereafter we always require that  $\{ \bM, \bK,\bk, \br, \bs\}$ satisfy the
assumption of proposition \ref{prop-sym}, under which we
proceed further discussions.

\section{The elliptic lattice potential KdV system}\label{sec-4}

\subsection{Discrete dispersion relation and recurrence relations}\label{sec:4.1}

Now let us insert discrete dispersion relation on $\br$ as follows
\begin{equation}
(a \bI- \bk)\wt{\br}  =(a \bI+\bk) \br,~~ (b \bI-\bk)\wh{\br}  =(b \bI+\bk) \br,~~ a,b \notin \mathcal{E}(\pm\bk),
\label{dr}
\end{equation}
while we take $\bc$ to be a constant vector.

Employing a similar procedure as done in \cite{ZZ-SAM-2013}, from the
Sylvester equation \eqref{SE-3a}, matrix system \eqref{ec-mat} and the dispersion relation
\eqref{dr}, one can derive the following shift relation of $\bM$,
\begin{subequations}\label{cond-qM-th}
\begin{align}
& (a \bI-\bk) \wt{\bM}=(a \bI+\bk)\bM, \label{cond-qM-t}\\
& (b \bI- \bk) \wh{\bM}=(b \bI+\bk)  \bM, \label{cond-qM-h}
\end{align}
\end{subequations}
and
\begin{subequations}
\begin{align}
& \wt{\bM}(a\bI+\bk)-(a\bI+\bk)\bM=\wt{\br}{\bc}^{\ST}-g\bK^{-1}\wt{\br} {\bc}^{\ST} \bK^{-1}, \label{eq:qM-dyna-1}\\
& (a\bI-\bk)\wt{\bM}-\bM(a\bI-\bk)=\br{\bc}^{\ST}-g\bK^{-1}\br{\bc}^{\ST} \bK^{-1}, \label{eq:qM-dyna-2}\\
& \wh{\bM}(b\bI+\bk)-(b\bI+\bk)\bM=\wh{\br}{\bc}^{\ST}-g\bK^{-1}\wh{\br} {\bc}^{\ST} \bK^{-1},\label{eq:qM-dyna-3}\\
& (b\bI-\bk)\wh{\bM}-\bM(b\bI-\bk)=\br{\bc}^{\ST}-g\bK^{-1}\br {\bc}^{\ST}
\bK^{-1}.\label{eq:qM-dyna-4}
\end{align}
\label{qM-dyna}
\end{subequations}

These relations lead to the following results.
\begin{Proposition}
\label{P:evo-sij}
Under the assumption of proposition \ref{prop-sym} and the dispersion relation \eqref{dr},
the scalar functions $\S{i,j}$ defined by \eqref{Sij} enjoy the following recurrence relations:
\begin{subequations}\label{Sij-shift}
\begin{eqnarray}
&& a \wt{S}^{(2i,2j)}-\wt{S}^{(2i,2j+1)}=a S^{(2i,2j)}+S^{(2i+1,2j)}-\wt{S}^{(2i,0)}S^{(0,2j)}+g\wt{S}^{(2i,-2)}{S}^{(-2,2j)},\label{Sij-shift-a}\\
&&  a {S}^{(2i,2j)}+{S}^{(2i,2j+1)}=a \wt{S}^{(2i,2j)}-\wt{S}^{(2i+1,2j)}+{S}^{(2i,0)}\wt S^{(0,2j)}-g{S}^{(2i,-2)}\wt{S}^{(-2,2j)},\label{Sij-shift-b}\\
&& b \wh{S}^{(2i,2j)}-\wh{S}^{(2i,2j+1)}=b S^{(2i,2j)}+S^{(2i+1,2j)}-\wh{S}^{(2i,0)}S^{(0,2j)}+g\wh{S}^{(2i,-2)}{S}^{(-2,2j)},\label{Sij-shift-e}\\
&&  b {S}^{(2i,2j)}+{S}^{(2i,2j+1)}=b \wh{S}^{(2i,2j)}-\wh{S}^{(2i+1,2j)}+{S}^{(2i,0)}\wh S^{(0,2j)}-g{S}^{(2i,-2)}\wh{S}^{(-2,2j)},\label{Sij-shift-f}\\
&& a \wt{S}^{(2i+1,2j)}-\wt{S}^{(2i+1,2j+1)}=a {S}^{(2i+1,2j)}+{S}^{(2i+2,2j)}-\wt{S}^{(2i+1,0)}S^{(0,2j)}\nonumber\\
&&~~~~~~~~~~~~~~~~~~~~~~~~~~~~~~~~+g{S}^{(2i-2,2j)}+g\wt{S}^{(2i+1,-2)}S^{(-2,2j)}+3e_1{S}^{(2i,2j)},\label{Sij-shift-c}\\
&& a {S}^{(2i+1,2j)}+{S}^{(2i+1,2j+1)}=a \wt{S}^{(2i+1,2j)}-\wt{S}^{(2i+2,2j)}+{S}^{(2i+1,0)}\wt S^{(0,2j)}\nonumber\\
&&~~~~~~~~~~~~~~~~~~~~~~~~~~~~~~~~-g\wt{S}^{(2i-2,2j)}-g\wt{S}^{(2i+1,-2)}S^{(-2,2j)}-3e_1\wt{S}^{(2i,2j)},\label{Sij-shift-d}\\
&& b \wh{S}^{(2i+1,2j)}-\wh{S}^{(2i+1,2j+1)}=b {S}^{(2i+1,2j)}+{S}^{(2i+2,2j)}-\wh{S}^{(2i+1,0)}S^{(0,2j)}\nonumber\\
&&~~~~~~~~~~~~~~~~~~~~~~~~~~~~~~~~+g{S}^{(2i-2,2j)}+g\wh{S}^{(2i+1,-2)}S^{(-2,2j)}+3e_1{S}^{(2i,2j)},\label{Sij-shift-g}\\
&& b {S}^{(2i+1,2j)}+{S}^{(2i+1,2j+1)}=b \wh{S}^{(2i+1,2j)}-\wh{S}^{(2i+2,2j)}+{S}^{(2i+1,0)}\wh S^{(0,2j)}\nonumber\\
&&~~~~~~~~~~~~~~~~~~~~~~~~~~~~~~~~-g\wh{S}^{(2i-2,2j)}-g\wh{S}^{(2i+1,-2)}S^{(-2,2j)}-3e_1\wh{S}^{(2i,2j)}.\label{Sij-shift-h}
\end{eqnarray}
\end{subequations}
\end{Proposition}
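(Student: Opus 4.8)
The plan is to mimic the proof of Proposition \ref{prop-2}, working with the auxiliary column vectors $\bu^{(2i)}=(\bI+\bM)^{-1}\bK^i\br$ and $\bu^{(2i+1)}=(\bI+\bM)^{-1}\bk\bK^i\br$ from \eqref{ui}, together with their shifts $\wt\bu^{(n)}$ and $\wh\bu^{(n)}$ (obtained by $\bM\to\wt\bM$, $\br\to\wt\br$, etc., since $\bc,\bk,\bK$ are not shifted). The dictionary is that for any index $p$ one has $S^{(p,2j)}=\bc^{\ST}\bK^j\bu^{(p)}$ and $S^{(p,2j+1)}=\bc^{\ST}\bK^j\bk\,\bu^{(p)}$, and likewise with tildes/hats throughout; so each of the eight relations in \eqref{Sij-shift} will follow once we produce a vector identity for $(a\bI\mp\bk)\wt\bu^{(p)}$ (or for $(a\bI\pm\bk)\bu^{(p)}$, and the $b$-hat analogues) and left-multiply it by $\bc^{\ST}\bK^j$.

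First I would record the ``dressed'' forms of the shift relations \eqref{qM-dyna}: from \eqref{eq:qM-dyna-2}, $(a\bI-\bk)(\bI+\wt\bM)=(\bI+\bM)(a\bI-\bk)+\br\bc^{\ST}-g\bK^{-1}\br\bc^{\ST}\bK^{-1}$, and from \eqref{eq:qM-dyna-1}, $(\bI+\wt\bM)(a\bI+\bk)=(a\bI+\bk)(\bI+\bM)+\wt\br\bc^{\ST}-g\bK^{-1}\wt\br\bc^{\ST}\bK^{-1}$, together with the $b$-hat counterparts coming from \eqref{eq:qM-dyna-3}--\eqref{eq:qM-dyna-4}. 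For \eqref{Sij-shift-a}, apply the first identity to $\wt\bu^{(2i)}$: on the left-hand side $(\bI+\wt\bM)\wt\bu^{(2i)}=\bK^i\wt\br$, and then the dispersion relation $(a\bI-\bk)\wt\br=(a\bI+\bk)\br$ together with $\bk\bK=\bK\bk$ collapses it to $(a\bI+\bk)\bK^i\br$; the two rank-one terms on the right contract to the scalars $\bc^{\ST}\wt\bu^{(2i)}=\wt{S}^{(2i,0)}$ and $\bc^{\ST}\bK^{-1}\wt\bu^{(2i)}=\wt{S}^{(2i,-2)}$. Solving for $(a\bI-\bk)\wt\bu^{(2i)}$ and applying $(\bI+\bM)^{-1}$ on the left gives $(a\bI-\bk)\wt\bu^{(2i)}=a\bu^{(2i)}+\bu^{(2i+1)}-\wt{S}^{(2i,0)}\bu^{(0)}+g\,\wt{S}^{(2i,-2)}\bu^{(-2)}$; left-multiplying by $\bc^{\ST}\bK^j$ yields \eqref{Sij-shift-a}. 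Relation \eqref{Sij-shift-b} comes out the same way by applying the second dressed identity to $\bu^{(2i)}$ and peeling off $(\bI+\wt\bM)^{-1}$ on the left.

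The odd-first-index relations \eqref{Sij-shift-c}--\eqref{Sij-shift-d} are handled identically starting from $\bu^{(2i+1)}$; the only new step is that after the collapse one meets $(\bI+\bM)^{-1}\bk^2\bK^i\br$ (resp. $(\bI+\wt\bM)^{-1}\bk^2\bK^i\wt\br$), and here one invokes the matrix curve relation $\bk^2=\bK+3e_1\bI+g\bK^{-1}$ from \eqref{ec-mat} to split it as $\bu^{(2i+2)}+3e_1\bu^{(2i)}+g\,\bu^{(2i-2)}$ --- this is precisely what produces the extra terms $S^{(2i+2,2j)}$, $3e_1S^{(2i,2j)}$, $g\,S^{(2i-2,2j)}$ in \eqref{Sij-shift-c} and their counterparts in \eqref{Sij-shift-d}. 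Finally, \eqref{Sij-shift-e}--\eqref{Sij-shift-h} are verbatim copies of \eqref{Sij-shift-a}--\eqref{Sij-shift-d} with $a\to b$ and $\wt{\ }\to\wh{\ }$, using \eqref{cond-qM-h}, \eqref{eq:qM-dyna-3}, \eqref{eq:qM-dyna-4} and the second dispersion relation in \eqref{dr}. The computation is otherwise routine; the only delicate points are (i) choosing, for each of the eight relations, the right member of \eqref{qM-dyna} so that the un-shifted vector $\br$ (on which the dispersion relation acts) appears in the right place, and (ii) the non-commutativity of $\bk$ with $(\bI+\bM)^{-1}$, which forces one to keep $\bk$ pressed against $\bK^i\br$ (where it commutes with $\bK$) rather than pushing it through the resolvent --- this asymmetry is exactly what separates the ``$a\,\wt{S}-\wt{S}$'' form from the ``$a\,S+S$'' form and the even- from the odd-index family.
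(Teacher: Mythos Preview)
Your proposal is correct and is essentially the same argument the paper has in mind: the paper itself only says the proof is ``similar to the one for theorem~2 in Ref.~[ZZ-SAM-2013]'' and skips the details, but the very computation you describe --- tilde-shifting $(\bI+\bM)\bu^{(2i)}=\bK^i\br$, left-multiplying by $(a\bI-\bk)$, using \eqref{eq:qM-dyna-2} and \eqref{dr}, and then peeling off $(\bI+\bM)^{-1}$ --- appears verbatim later in the paper as \eqref{ui-2-1-t}--\eqref{ui-2-1-2} when the Lax pair is derived. Your handling of the odd-index cases via $\bk^2=\bK+3e_1\bI+g\bK^{-1}$ and of the $b$-hat cases by symmetry is exactly right.
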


The proof of this proposition is similar to the one for theorem 2 in Ref.\cite{ZZ-SAM-2013}.
Here we skip the details. We also note that these relations are corresponding  to the discrete matrix Riccati type of relations (2.12)
in Ref.\cite{NP-JNMP-2003}.

\subsection{Elliptic lattice equations}\label{sec:4.2}

To obtain  elliptic lattice equations, we introduce scalar functions (cf.\cite{NP-JNMP-2003})
\begin{equation}\label{uvswh}
   u=S^{(0,0)},~~s=S^{(-2,0)},~~h=S^{(-2,-2)},~~v=1-S^{(-1,0)},~~w=1+S^{(-2,1)}.
\end{equation}
It then follows from \eqref{Sij-shift} that
\begin{subequations}\label{re-la}
\begin{align}
a(u-\wt u)&=-(\wt{S}^{(0,1)}+S^{(0,1)})-gs\wt s+u\wt u,\label{re-la-1}\\
a(h-\wt h)&=-(\wt{S}^{(-1,-2)}+S^{(-1,-2)})-gh\wt h+s\wt s, \label{re-la-2}\\
a(s-\wt{s})&=g\wt{h}s+\wt{w}-v-u\wt{s},\label{re-la-41} \\
a(s-\wt{s})&=g{h}\wt s+{w}-\wt v-\wt u{s}, \label{re-la-4}\\
a(v-\wt v)&=\wt{S}^{(-1,1)}+3e_1 s+u\wt v+gvh+gs({S}^{(-2,-1)}+\wt{S}^{(-2,-1)}),\label{re-la-42}\\
a(v-\wt v)&={S}^{(-1,1)}+3e_1 \wt s+\wt u v+g\wt v\wt h+g\wt s({S}^{(-2,-1)}+\wt{S}^{(-2,-1)}),\label{re-la-5}\\
a(w-\wt w)&=\wt{S}^{(-1,1)}+3e_1 s+uw+gh \wt w-s({S}^{(0,1)}+\wt{S}^{(0,1)}),\label{re-la-43}\\
a(w-\wt w)&={S}^{(-1,1)}+3e_1 \wt s+\wt u\wt w+g\wt h w-\wt s({S}^{(0,1)}+{S}^{(0,1)}),\label{re-la-6}
\end{align}
where we have made use of the symmetric property $\S{i,j}=\S{j,i}$ and the relation \eqref{Sij-k-1}
with $(i,j)=(0,-2)$,
i.e.
\[
\S{0,-4}=\S{-2,-2}+\S{-1,-2}\S{0,-2}-\S{-2,-2}\S{0,-1}=hv+s\S{-2,-1}.
\]
One more relation coming from \eqref{Sij-k-1} is
\begin{align}
s{S}^{(-1,1)}&=1-vw,\label{re-la-7}
\end{align}
when taking $(i,j)=(0,1)$.
\end{subequations}
The shift relations w.r.t. $\{b,\wh{\cdot}~\}$ can be obtained by
interchanging the relations $\{a,\wt{\cdot}~\}$ with $\{b,\wh{\cdot}~\}$ in \eqref{re-la}.

With these relations in hand, we can combine them, eliminate $\S{0,1}, \S{-1,-2}, h$ and $v$ and reach
the closed form of elliptic lattice equations of the variables $u,s$ and $w$, (cf. \cite{NP-JNMP-2003})
\begin{subequations}\label{ell-la}
\begin{align}
&(a+b+u-\wh{\wt{u}})(a-b+\wh{u}-\wt{u})=a^2-b^2+g(\wt s-\wh s)(\wh{\wt{s}}-s),\label{ell-la-1}\\
&(\wh{\wt{s}}-s)(\wt{w}-\wh{w})=[(a+u)\wt s-(b+u)\wh s]\wh{\wt{s}}-[(a-\wh{\wt{u}})\wh s-(b-\wh{\wt{u}})\wt s]s,\label{ell-la-2}\\
&(\wh{s}-\wt s)(\wh{\wt{w}}-{w})=[(a-\wt u) s+(b+\wt u)\wh{\wt{s}}]\wh{s}-[(a+\wh{u})\wh{\wt{s}}+(b-\wh{u}) s]\wt s, \label{ell-la-3}\\
&(a+ u-\frac{\wt w}{\wt s})({a-\wt u+\frac{w}{s}})=a^2-R(s\t s), \label{ell-la-4}\\
&(b+ u-\frac{\wh w}{\wh s})({b-\wh u+\frac{w}{s}})=b^2-R(s\h s),\label{ell-la-5}
\end{align}
\end{subequations}
where
\begin{equation}\label{ell-curve-2-4}
    y^2=R(x)=\frac{1}{x}+3e+gx.
\end{equation}
This is an elliptic generalization of the lpKdV equation,
namely that if $g=0$ the first equation \eqref{ell-la-1} decouples and becomes the standard lattice potential KdV equation.

Furthermore, solutions to the elpKdV system \eqref{ell-la} can be expressed in the explicit structure:
\begin{subequations} \label{ell-lattice}
\begin{align}
&u=S^{(0,0)}={\bc}^{\ST} \,(\bI+ \bM)^{-1}  \br,\\
&s=S^{(-2,0)}={\bc}^{\ST} \,(\bI+ \bM)^{-1} \bK^{-1} \br,\\
&h=S^{(-2,-2)}={\bc}^{\ST} \,\bK^{-1}(\bI+ \bM)^{-1} \bK^{-1} \br,\\
&v=1-S^{(-1,0)}=1-{\bc}^{\ST} \,(\bI+ \bM)^{-1} \bk \bK^{-1} \br,\\
&w=1+S^{(-2,1)}={\bc}^{\ST} \,\bk (\bI+ \bM)\bK^{-1} \br.
\end{align}
\end{subequations}
We only need to solve out $\br$ from \eqref{dr} with $\bk$ taking three cases in \eqref{ga-cases}. We will list out explicit forms of
$\br$ in Appendix \ref{A:2}.

\subsection{Lax pair for the elpKdV system \eqref{ell-lattice}}

Rewriting \eqref{ui} as
\begin{subequations} \label{ui-2-}
\begin{eqnarray}
&&\bK^i \br=(\bI+\bM)\bu^{(2i)} ,\label{ui-2-1}\\
&&\bk \bK^i \br=(\bI+\bM)\bu^{(2i+1)},~~ i\in
\mathbb{Z},\label{ui-2-2}
\end{eqnarray}
\end{subequations}
we can perform a tilde shift to \eqref{ui-2-1} and multiply the result by $(a\bI-\bk)$ to give:
\begin{equation}
\bK^i (a\bI+\bk)
\br=(\bI+\bM)(a\bI-\bk)\wt{\bu}^{(2i)}+(\br{\bc}^{\ST}-g\bK^{-1}\br
{\bc}^{\ST} \bK^{-1})\wt{\bu}^{(2i)}, \label{ui-2-1-t}
\end{equation}
where we have also make use of the fact $\bk \bK=\bK\bk$, the dispersion relation \eqref{dr},
shift relation \eqref{cond-qM-t} and the Sylvester equation \eqref{SE-3a}.
This further brings
\begin{equation}\label{ui-2-1-2}
(a\bI-\bk)\wt{\bu}^{(2i)}=a{\bu}^{(2i)}+{\bu}^{(2i+1)}-{\t
S}^{(2i,0)}{\bu}^{(0)}+g{\t S}^{(2i,-2)}{\bu}^{(-2)}.
\end{equation}
Taking $i=0$ it reads
\begin{equation}\label{ui-2-1-3}
(a\bI-\bk)\wt{\bu}^{(0)}=(a-{\t
u}){\bu}^{(0)}+{\bu}^{(1)}+g{\t s}\bK^{-1}(w{\bu}^{(0)}-s{\bu}^{(1)}),
\end{equation}
where we have replaced ${\bu}^{(-2)}$ by
\begin{equation}\label{u-2}
   {\bu}^{(-2)}=\bK^{-1}(w{\bu}^{(0)}-s{\bu}^{(1)})
\end{equation}
coming from the relation \eqref{re-u+1} with $s=1,~i=-1$.

In a similar way, from \eqref{ui-2-2} we have
\begin{equation}\label{ui-2-2-3}
   (a\bI-\bk)\wt{\bu}^{(1)}=(\bK+g\wt w w\bK^{-1}) {\bu}^{(0)}+(3e_1+g\wt s s+a(u-\wt u)-u \wt u){\bu}^{(0)}+(a+u-g\wt w s\bK^{-1}){\bu}^{(1)},
\end{equation}
where
\begin{equation}\label{u+2}
   {\bu}^{(2)}=\bK{\bu}^{(0)}-\bS^{(0,1)}{\bu}^{(0)}+u{\bu}^{(1)}
\end{equation}
is used.

It is easy to check that under the similarity transformation \eqref{trans-sim}, equations
\eqref{ui-2-1-3} and \eqref{ui-2-2-3}
are formally invariant if we define ${\bu}^{(0)}_1=\bT{\bu}^{(0)},~{\bu}^{(1)}_1=\bT{\bu}^{(1)}$.
That means we can directly consider that $\bk$ is in its canonical form and $\bK$ is consequently defined by \eqref{ec-mat}.
Thus, the first row of $\bk$ will be $(k,0,0,\cdots,0)$ and the first rows of $\bK$ and $\bK^{-1}$ have to be
$(K,0,0,\cdots,0)$  and  $(1/K,0,0,\cdots,0)$ respectively, where $(k,K)$ obey the elliptic curve \eqref{ec-sca}.

Let us denote the first element of ${\bu}^{(0)}$ by $({\bu}^{(0)})_1$ and
the first element of ${\bu}^{(1)}$ by $({\bu}^{(1)})_1$,
and introduce the vector
 \begin{equation}\label{phi}
    \phi=\left(
           \begin{array}{c}
            ( {\bu}^{(0)})_1\\
            ( {\bu}^{(1)})_1\\
           \end{array}
         \right).
 \end{equation}
Then, from  \eqref{ui-2-1-3} and \eqref{ui-2-2-3} and their $(b,\wh{~~} )$ version
we obtain the following discrete linear system:
 \begin{subequations}\label{dlax pair}
\begin{align}
&(a-k)\wt\phi=L(K)\phi,\\
&(b-k)\wh\phi=M(K)\phi,
\end{align}
\end{subequations}
 where
 \begin{equation}\label{L(K)}
   L(K)=\left(
          \begin{array}{cc}
            a-\t u+\frac{g}{K}\t sw & 1- \frac{g}{K}\t ss\\
            K+3e+g\t ss+a(u-\t u)-\t uu+\frac{g}{K}\t ww & a+u-\frac{g}{K}\t ws \\
          \end{array}
        \right)
 \end{equation}
which is as same as in \cite{NP-JNMP-2003}, and $M(K)$ is the $(b,\wh{~~} )$ counterpart of $L(K)$.
The point $(k,K)$ obeys the elliptic curve \eqref{ec-sca} and here they play the roles of spectral parameters.
The compatibility condition
\begin{equation}\label{compat}
   \wh LM=\wt ML
\end{equation}
yields the whole ellpKdV system \eqref{ell-la} with the exception of eq.\eqref{ell-la-3}.
Here we nota that there can be mis-matched between a discrete system and the system derived from its Lax pair, for example,
the discrete Boussinesq-type equations \cite{ZZN-SAM-2012}.

\section{The elliptic potential KdV system}\label{sec-5}

In this section we try to derive a continuous elliptic potential KdV (eqKdV) system.
The procedure will be similar to the one for the KdV system in \cite{XZZ-2014-JNMP},
which can be viewed as a continuous version of Cauchy matrix approach,
where the recurrence relation of $\S{i,j}$ (see (2.3) in \cite{XZZ-2014-JNMP} and \eqref{Sij-k1} in this paper) will play key roles.

\subsection{Evolution of $\bM$}\label{sec:5.1}

We suppose that $\bM,\br, \bc$ are functions of $(x,t)$ while
$\bk$ is still a non-trivial constant matrix.
The dispersion relation is now defined through the evolution of $\br$
and $\bc$ as follows,
\begin{subequations}
\begin{align}
& \br_x=\bk \br,~~\bc_x=\bk^{\ST} \bc, \label{evo-rs-x} \\
& \br_t=4\bk^3 \br,~~\bc_t=4(\bk^{\ST})^3\bc. \label{evo-rs-t}
\end{align}
\label{evo-rs}
\end{subequations}
Taking the derivative of the Sylvester equation \eqref{SE-3a} w.r.t. $x$
and making using of \eqref{evo-rs-x} we have
\begin{align*}
\bk \bM_x+ \bM_x\bk &=\br_x \,{\bc}^{\ST}+ \br {\bc}^{\ST}_x-g\bK^{-1}\br_x {\bc}^{\ST} \bK^{-1}-g\bK^{-1}\br {\bc}^{\ST}_x \bK^{-1} \\
&=\bk\br{\bc}^{\ST}+\br{\bc}^{\ST}\bk-g\bK^{-1}\bk \br {\bc}^{\ST}
\bK^{-1}-g\bK^{-1}\br {\bc}^{\ST} \bk \bK^{-1},
\end{align*}
i.e.
\[\bk (\bM_x-\br{\bc}^{\ST}+g\bK^{-1}\br {\bc}^{\ST} \bK^{-1})+ (\bM_x-\br{\bc}^{\ST}+g\bK^{-1}\br {\bc}^{\ST} \bK^{-1})\bk=0, \]
where we have made use of the relation $\bk\bK=\bK\bk.$
Using proposition \ref{prop-1} this yields
\begin{equation}
\bM_x=\br{\bc}^{\ST}-g\bK^{-1}\br {\bc}^{\ST} \bK^{-1}, \label{evo-Mx}
\end{equation}
i.e.
\begin{equation}
\bM_x=\bk \bM+ \bM\bk, \label{Mx-1}
\end{equation}
if we use the Sylvester equation \eqref{SE-3a}.
In a similar way, for the time evolution of $\bM$, we have
\begin{equation}
\bM_t=4(\bk^3 \bM+ \bM\bk^3).
\label{Mt3}
\end{equation}

\subsection{Evolution of $\S{i,j}$}

With the evolution formulas \eqref{evo-rs}-\eqref{Mt3},
we can derive the evolution of $\S{i,j}$.
We make use of the auxiliary vectors $\bu^{(i)}$ defined in \eqref{ui}, i.e.
\begin{subequations}
\begin{eqnarray}
&&(\bI+\bM)\bu^{(2i)}= \bK^i \br,\label{ui-1}\\
&&(\bI+\bM)\bu^{(2i+1)}=\bk \bK^i \br.
\label{ui-2}
\end{eqnarray}
\end{subequations}
By them $\S{i,j}$ are expressed as
\begin{subequations}\label{Sij-3}
\begin{eqnarray}
&&S^{(2i,2j)}= {\bc}^{\ST} \,\bK^j\bu^{(2i)},\label{Sij-3-a}\\
&& S^{(2i+1,2j)}= {\bc}^{\ST} \,\bK^j\bu^{(2i+1)},\label{Sij-3-b}\\
&&S^{(2i,2j+1)}= {\bc}^{\ST} \,\bK^j\bk \bu^{(2i)},\label{Sij-3-c}\\
&& S^{(2i+1,2j+1)}= {\bc}^{\ST} \,\bK^j\bk \bu^{(2i+1)}.\label{Sij-3-d}
\end{eqnarray}
\end{subequations}
Taking $x$-derivative on \eqref{ui-1}  we have
\begin{equation}
    \bM_x\bu^{(2i)}+(\bI+\bM)\bu^{(2i)}_x=\bK^{i}\br_x=\bK^{i}\bk\br,
\end{equation}
and further, by substitution of \eqref{evo-Mx}, we have
\begin{equation}
    (\bI+\bM)\bu^{(2i)}_x=\bK^{i}\bk\br-(\br {\bc}^{\ST}-g\bK^{-1}\br {\bc}^{\ST} \bK^{-1} )\bu^{(2i)},
\end{equation}
which indicates the evolution of $\bu^{(2i)}$ in $x$-direction:
\begin{subequations}\label{evo-uix}
\begin{equation}
\bu^{(2i)}_x=\bu^{(2i+1)}-S^{(2i,0)}\bu^{(0)}+gS^{(2i,-2)}\bu^{(-2)}.
\end{equation}
Similarly we can derive out the evolution of $\bu^{(2i+1)}$ in $x$-direction
\begin{eqnarray}
\bu^{(2i+1)}_x=\bu^{(2i+2)}-S^{(2i+1,0)}\bu^{(0)}+gS^{(2i+1,-2)}\bu^{(-2)}+3e_1\bu^{(2i)}+g\bu^{(2i-2)},
\end{eqnarray}
\end{subequations}
and the evolution of $\bu^{(i)}$ in $t$-direction:
\begin{subequations}\label{evo-uit}
\begin{eqnarray}
&&~~~\bu^{(2i)}_{t}=4[g\bu^{(-2)}(S^{(2i,0)}+3e_1S^{(2i,-2)}+gS^{(2i,-4)})+gS^{(2i,-2)}(\bu^{(0)}+3e_1\bu^{(-2)}+g\bu^{(-4)})\nonumber\\
&&~~~~~~~~~~~~-\bu^{(0)}(S^{(2i,2)}+3e_1S^{(2i,0)}+gS^{(2i,-2)})
+S^{(2i,1)}\bu^{(1)}-g\bu^{(-1)}S^{(2i,-1)}+\bu^{(2i+3)}\nonumber\\
&&~~~~~~~~~~~~+3e_1\bu^{(2i+1)}+g\bu^{(2i-1)}-S^{(2i,0)}(\bu^{(2)}+3e_1\bu^{(0)}+g\bu^{(-2)})],\\
&&\bu^{(2i+1)}_{t}=4[g\bu^{(-2)}(S^{(2i+1,0)}+3e_1S^{(2i+1,-2)}+gS^{(2i+1,-4)})-g\bu^{(-1)}S^{(2i+1,-1)}+\bu^{(2i+4)}\nonumber\\
&&~~~~~~~~~~~~-\bu^{(0)}(S^{(2i+1,2)}+3e_1S^{(2i+1,0)}+gS^{(2i+1,-2)})+(9e_1^2+2g)\bu^{(2i)}+g^2\bu^{(2i-4)}
\nonumber\\
&&~~~~~~~~~~~~+6e_1g\bu^{(2i-2)}+6e_1\bu^{(2i+2)}-S^{(2i+1,0)}(\bu^{(2)}+3e_1\bu^{(0)}+g\bu^{(-2)})\nonumber\\
&&~~~~~~~~~~~~+S^{(2i+1,1)}\bu^{(1)}+gS^{(2i+1,-2)}(\bu^{(0)}+3e_1\bu^{(-2)}+g\bu^{(-4)})].
\end{eqnarray}
\end{subequations}
These evolution of $\bu^{(i)}$ can be turned into the evolution of $\S{i,j}$:
\begin{subequations}
\label{evo-Sijx}
\begin{align}
S^{(2i,2j)}_{x}=&S^{(2i+1,2j)}+S^{(2i,2j+1)}-S^{(2i,0)}S^{(0,2j)}+gS^{(2i,-2)}S^{(-2,2j)},  \label{evo-Sijx-1}\\
S^{(2i,2j+1)}_x=&S^{(2i,2j+2)}+3e_1S^{(2i,2j)}+gS^{(2i,2j-2)}-S^{(2i,0)}S^{(0,2j+1)}+S^{(2i+1,2j+1)}\nonumber \\
&+gS^{(2i,-2)}S^{(-2,2j+1)},\\
S^{(2i+1,2j+1)}_x=&S^{(2i+1,2j+2)}+3e_1S^{(2i+1,2j)}+gS^{(2i+1,2j-2)}-S^{(2i+1,0)}S^{(0,2j+1)}+S^{(2i+2,2j+1)}\nonumber \\
&+gS^{(2i+1,-2)}S^{(-2,2j+1)}+3e_1S^{(2i,2j+1)}+gS^{(2i-2,2j+1)},
\end{align}
\end{subequations}
and
\begin{subequations}
\label{evo-Sijt}
\begin{align}
S^{(2i,2j)}_{t}=&4[gS^{(-2,2j)}(S^{(2i,0)}+3e_1S^{(2i,-2)}+gS^{(2i,-4)})+gS^{(2i,-2)}(S^{(0,2j)}+3e_1S^{(-2,2j)}\nonumber\\
&+gS^{(-4,2j)})-S^{(0,2j)}(S^{(2i,2)}+3e_1S^{(2i,0)}+gS^{(2i,-2)})+S^{(2i,1)}S^{(1,2j)}\nonumber\\
&-gS^{(-1,2j)}S^{(2i,-1)}+S^{(2i+3,2j)}+3e_1S^{(2i+1,2j)}+gS^{(2i-1,2j)}-S^{(2i,0)}(S^{(2,2j)}\nonumber\\
&+3e_1S^{(0,2j)}+gS^{(-2,2j)})+S^{(2i,2j+3)}+3e_1S^{(2i,2j+1)}+gS^{(2i,2j-1)}],\\
S^{(2i,2j+1)}_{t}=&4[gS^{(-2,2j+1)}(S^{(2i,0)}+3e_1S^{(2i,-2)}+gS^{(2i,-4)})+gS^{(2i,-2)}(S^{(0,2j+1)}\nonumber\\
&+3e_1S^{(-2,2j+1)}+gS^{(-4,2j+1)})-S^{(0,2j+1)}(S^{(2i,2)}+3e_1S^{(2i,0)}+gS^{(2i,-2)})\nonumber\\
&+S^{(2i,1)}S^{(1,2j+1)}-gS^{(-1,2j+1)}S^{(2i,-1)}+S^{(2i+3,2j+1)}+3e_1S^{(2i+1,2j+1)}\nonumber\\
&+gS^{(2i-1,2j+1)}-S^{(2i,0)}(S^{(2,2j+1)}+3e_1S^{(0,2j+1)}+gS^{(-2,2j+1)})+S^{(2i,2j+4)}\nonumber\\
&+6e_1S^{(2i,2j+2)}+(9e_1^2+2g)S^{(2i,2j)}+6e_1gS^{(2i,2j-2)}+g^2S^{(2i,2j-4)}],
\end{align}
\begin{align}
S^{(2i+1,2j+1)}_{t}=&4[gS^{(-2,2j+1)}(S^{(2i+1,0)}+3e_1S^{(2i+1,-2)}+gS^{(2i+1,-4)})+gS^{(2i+1,-2)}(S^{(0,2j+1)}\nonumber\\
&+3e_1S^{(-2,2j+1)}+gS^{(-4,2j+1)})-S^{(0,2j+1)}(S^{(2i+1,2)}+3e_1S^{(2i+1,0)}+gS^{(2i+1,-2)})\nonumber\\
&+S^{(2i+1,1)}S^{(1,2j+1)}-gS^{(-1,2j+1)}S^{(2i+1,-1)}+S^{(2i+4,2j+1)}+6e_1S^{(2i+2,2j+1)}\nonumber\\
&+6e_1S^{(2i+1,2j+2)}+(9e_1^2+2g)S^{(2i+1,2j)}+6e_1gS^{(2i+1,2j-2)}+g^2S^{(2i+1,2j-4)}\nonumber\\
&-S^{(2i+1,0)}(S^{(2,2j+1)}+3e_1S^{(0,2j+1)}+gS^{(-2,2j+1)})+S^{(2i+1,2j+4)}\nonumber\\
&+(9e_1^2+2g)S^{(2i+1,2j)}+6e_1gS^{(2i+1,2j-2)}+g^2S^{(2i+1,2j-4)}].
\end{align}
\end{subequations}
It is not necessary to write out the $S^{(2i+1,2j)}_x$ and
$S^{(2i+1,2j)}_t$ due to the symmetry property $S^{(i,j)}=S^{(j,i)}$.
One can repeatedly use \eqref{evo-Sijx} and easily get higher-order
$x$-derivatives
of $S^{(i,j)}$  by means of computer algebra, e.g. {\it Mathematica}.

These derivatives of $\S{i,j}$ bring the following epKdV system \cite{NP-JNMP-2003}
\begin{subequations}
\label{c-ell-KdV}
\begin{align}
u_t=&u_{xxx}+6u_x^2-6gs_x^2, \label{c-ell-KdV-1}\\
s_t=&s_{xxx}+6u_x s_x-6gs_x h_x, \label{c-ell-KdV-2} \\
h_t=&h_{xxx}+6s_x^2-6gh_x^2, \label{c-ell-KdV-3}\\
v_t=&v_{xxx}+6v_x u_x+6gs_xS^{(-1,-2)}_x,  \label{c-ell-KdV-4}\\
w_t=&w_{xxx}+6s_xS^{(0,1)}_x-6gw_xh_x, \label{c-ell-KdV-5}
\end{align}
\end{subequations}
where $u,v,s,w,h$ are defined as in \eqref{uvswh}, and
\begin{subequations}
\begin{align}
    S^{(-1,-2)}&=\frac{1}{2}(h_x+s^2-gh^2),\label{S^{(-1,-2)}}\\
S^{(0,1)}&=\frac{1}{2}(u_x+u^2-gs^2)\label{S-01},
\end{align}
\end{subequations}
which are from \eqref{evo-Sijx-1}.

To finish the derivation of \eqref{c-ell-KdV}, one needs long and tedious verification in which
the recurrence relations \eqref{Sij-k1} are successively used.
For example,
for the first equation \eqref{c-ell-KdV-1}, after substitute the expressions of
$u_t, u_{xxx}, u_x, s_x$ one has
\begin{align*}
&\frac{1}{6}(u_t-u_{xxx}-6u_x^2+6gs_x^2)\\
=~&g^2 S^{(0, -2)} [S^{(0, -4)} - S^{(-1, -2)} S^{(0, -2)} +
      S^{(-2, -2)} (-1 + S^{(0, -1)})] \\
      & +
   g [-
      S^{(0, -2)} (-S^{(0, 0)} (1 + S^{(1, -2)}) +S^{(2, -2)} +{S^{(0, -2)}} S^{(1, 0)}) \\
   &-S^{(1, -2)} + S^{(0, -1)} (1 + S^{(1, -2)})-
         S^{(1, -1)}S^{(0, -2)} ] \\
        & - {S^{(1, 0)}}^2 +
   S^{(0, 0)} S^{(1, 1)}- S^{(2, 1)} + S^{(3, 0)}.
\end{align*}
It vanishes in light of the relations \eqref{Sij-k1} with
$(i,j)=(0,-2)$ and $(0,1)$.
Equations \eqref{c-ell-KdV-2}-\eqref{c-ell-KdV-5} can rigorously  be derived in a similar way
but the procedures are long. Here we skip them.

In \cite{NP-JNMP-2003}
\begin{equation}
A=-u+\frac{w}{s}
\end{equation}
is introduced, by which the epKdV system \eqref{c-ell-KdV} yielded the following coupled equation, (i.e. \eqref{c-ell-KdV-couple-sec1})
\begin{subequations}
\begin{align}
s_t=&4s_{xxx}+6s_x[R(s^2)-A^2-\frac{2A s_x}{s}-\frac{2s_{xx}}{s}], \label{new-st}\\
A_t=&4A_{xxx}-6A^2A_x +6A_xR(s^2)-\frac{6 s_x}{s}(R(s^2))_x, \label{new-At}
\end{align}
\label{eq:As}
\end{subequations}
with the elliptic curve $R(x)$ given in \eqref{ell-curve-2}. This coupled system is
integrable in the sense of admitting a continuous Lax
pair\cite{NP-JNMP-2003} (also see Sec.\ref{sec-5-3}).

\subsection{Lax pair }\label{sec-5-3}

Let us consider \eqref{evo-uix} with  $i=0$, which reads
\begin{subequations}\label{evo-uix-i=0-1}
\begin{align}
\bu^{(0)}_x&=\bu^{(1)}-u\bu^{(0)}+gs\bu^{(-2)},\\
\bu^{(1)}_x&=\bu^{(2)}-S^{(1,0)}\bu^{(0)}+gw\bu^{(-2)}+3e_1\bu^{(0)}.
\end{align}
\end{subequations}
After replacing $\bu^{(-2)},~\bu^{(2)}$ and $S^{(1,0)}$ with \eqref{u-2}, \eqref{u+2} and \eqref{S-01}, respectively, we have
\begin{subequations}\label{evo-uix-i=0-2}
\begin{align}
\bu^{(0)}_x&=\bu^{(1)}-u\bu^{(0)}+gs\bK^{-1}(w{\bu}^{(0)}-s{\bu}^{(1)}),\label{evo-uix-i=0-2-a}\\
\bu^{(1)}_x&=(\bK+gw^2\bK^{-1}){\bu}^{(0)}+(3e_1-u_x-u^2+gs^2)\bu^{(0)}+(u\bI -gsw\bK^{-1}){\bu}^{(1)}.\label{evo-uix-i=0-2-b}
\end{align}
\end{subequations}
As in the discrete case, we can consider   $\bk$ to be its canonical form and then from the first rows of
\eqref{evo-uix-i=0-2-a} and \eqref{evo-uix-i=0-2-b} we find the linear form
 \begin{subequations}\label{lax-con}
 \begin{equation}\label{lax-con-x}
   \phi_x=\left(
          \begin{array}{cc}
            - u+\frac{g}{K}sw & 1- \frac{g}{K}s^2\\
            K+3e+gs^2-u^2-u_x+\frac{g}{K}w^2 & u-\frac{g}{K}ws \\
          \end{array}
        \right)\phi,
 \end{equation}
where $\phi$ is defined by \eqref{phi}.
In a similar way from \eqref{evo-uit} we can find the time evolution of $ \phi$, which is formulated by
\begin{equation}\label{lax-con-t}
   \phi_t=-\left(
          \begin{array}{cc}
            S^{(0,1)}_x & u_x\\
            S^{(1,1)}_x & S^{(0,1)}_x  \\
          \end{array}
        \right)\phi-\frac{g}{K}\left(
          \begin{array}{cc}
            (1-vw)\frac{s_x}{s}+v_xw & vs_x-sv_x\\
            (1-vw)\frac{w_x}{s} -w(\frac{1-vw}{s})_x& -(1-vw)\frac{s_x}{s}-v_xw  \\
          \end{array}
        \right)\phi,
 \end{equation}
 \end{subequations}
 where
 \begin{equation}\label{S^{(1,1)}_x}
   S^{(1,1)}_x=\frac{1}{2}S^{(0,1)}_{xx}+uS^{(0,1)}_x-gsw_x,
 \end{equation}
and  $S^{(0,1)}$ is given by \eqref{S-01}.

\eqref{lax-con} can be viewed as a Lax pair of the system \eqref{c-ell-KdV},
which can also be derived from the direct linearization approach  \cite{NP-JNMP-2003}.
The compatibility gives equations \eqref{c-ell-KdV-1}, \eqref{c-ell-KdV-2} and \eqref{c-ell-KdV-5}.

\section{Straight continuum limits}\label{sec-6}

The skew continuum of the elpKdV system \eqref{ell-la} was considered in \cite{NP-JNMP-2003} when they studied initial value problems of  \eqref{ell-la}.
Such a limit is performed by
introducing  the skew-change of variables $(n,m)\mapsto (\mathcal{N}=n+m,m)$.

Let us consider the straight continuum limit, where we first take
\begin{equation}
m \rightarrow \infty,~~ b\rightarrow \infty,~~ {\rm while}~
\frac{m}{b}=\tau-\tau_0 \sim O(1),
\label{str-lim-1-H1}
\end{equation}
with $\tau_0$ being a constant.
Define
\begin{equation}
    u=u_{n,m}=:u_n(\tau),~~s=s_{n,m}=:s_n(\tau),~~w=w_{n,m}=:w_n(\tau).
\end{equation}
Then,  applying the Taylor expansions  into \eqref{ell-la} at $\tau$,
the leading term (in terms of $1/b$) of each equation yields the following semi-discrete equations
\begin{subequations}\label{semi-dis-str}
\begin{align}
&\partial_{\tau}(\t u+u)=2a(\t u-u)- (\t u-u)^2+g(\t s-s)^2,\label{semi-dis-str-1}\\
&\partial_{\tau}(s \t s)=(\t s-s)(a\t s+as-\t w+w)+ u\t s^2+\t us^2-s\t s(u+\t u),\label{semi-dis-str-2}\\
& (a+ u-\frac{\t{w}}{\t{s}})(a-\t{u}+\frac{w}{s})=a^2-R(s\t{s}),\label{semi-dis-str-4}\\
&\partial_{\tau}(u+\frac{w}{s})+(u-\frac{w}{s})^2=R(s^2),\label{semi-dis-str-5}
\end{align}
\end{subequations}
and we note both \eqref{ell-la-2} and \eqref{ell-la-3} yield \eqref{semi-dis-str-2} in the continuum limit.

For the full limit of   \eqref{semi-dis-str}, first we take
\begin{equation}
n \rightarrow \infty,~~ a \rightarrow \infty,~~ {\rm while}~
\frac{n}{a}=\xi \sim O(a^2)
\label{str-lim-2-H1}
\end{equation}
and then introduce continuous variables $x$ and $t$ by
\begin{equation}
x=\tau+\xi,~~~ t=\frac{\xi}{12a^2}, \label{xt-skew-lim-H1}
\end{equation}
with $\xi$ as an auxiliary variable.
Then, under the coordinates $(x,t)$ both \eqref{semi-dis-str-4} and \eqref{semi-dis-str-5} yield
\begin{equation}\label{last-con}
    \Bigl(u+\frac{w}{s}\Bigr)_x+\Bigl(u-\frac{w}{s}\Bigr)^2=R(s^2);
\end{equation}
\eqref{semi-dis-str-1} yields \eqref{c-ell-KdV-1};
\eqref{semi-dis-str-2} gives \eqref{c-ell-KdV-2}
and for that we need to make use of a relation $s_{xx}=2(g sh_x-us_x+w_x)$,
which is obtained as a continuous limit of the summation of $\t {\eqref{re-la-41}}$ and \eqref{re-la-4}.

If we employ the transformation $A=-u+\frac{w}{s}$ in the equations \eqref{semi-dis-str-2} and \eqref{semi-dis-str-4},
it turns out to be
\begin{subequations}
\begin{align}
&\partial_{\tau}(s \t s)=(\t s-s)(a\t s+as-\t A \t s+As)+ u\t s^2+\t us^2-s\t s(u+\t u),\label{semi-dis-str-31}\\
&(a+ u-\t
A-\t{u})(a-\t{u}+A+u)=a^2-R(s\t{s}).\label{semi-dis-str-41}
\end{align}
\end{subequations}
Their continuum limit gives the coupled system \eqref{eq:As},
where we need to use  the relation $ u_x=\frac{1}{2}(R(s^2)-A^2-A_x)$,
which is \eqref{last-con} in terms of $A$ and $u$, derived from \eqref{semi-dis-str-5}.

\section{Conclusions}\label{sec-7}

In this paper a new class of solutions of the elliptic KdV systems (both the discrete \eqref{ell-la-kdv}
and the continuous \eqref{c-ell-KdV-couple-sec1})
has been uncovered.
We made use of Sylvester-type equation with elliptic ingredient.
Solutions can be classified by the canonical form of $\bk$, which are much richer than pure solitons.

A Cauchy matrix  dressed by dispersion relations usually satisfies a Sylvester equation.
Starting from the Sylvester equation and dispersion relations,
not only integrable equations can be derived but also their solutions and Lax pairs can be constructed.
Such the Cauchy matrix approach is particularly powerful in the study of
discrete integrable systems (see \cite{NAH-2009-JPA,ZZ-SAM-2013}), as well as continuous systems \cite{XZZ-2014-JNMP}.
Dressed Cauchy matrices also play key roles in the so-called operator method \cite{M-book-1987,AC-1996-JMP,S-PD-1998,CS-Non-1999,CS-DMV-2000,S-LAA-2010},
trace method \cite{B-Non-2000}, etc.

Here we specially mention the connection with Inverse Scattering Transform (IST),
which is to express solutions of the Gel¡¯fand-Levitan-Marchenko equation
via a triplet $(\bA;\mathbf{b};\mathbf{c})$ where matrix $\bA$ and vectors $\mathbf{b}$ and $\mathbf{c}$
satisfy some Sylvester equations\cite{AM-IP-2006,ADM-IP-2007,ADM-JMP-2010}.
Solutions are classified by the canonical form of $\bA$.
The obtained breather-like solutions can describe potentials on half-line and with non-zero refection coefficients \cite{AM-IP-2006}.
We believe such connections can be found for the fully discrete systems which were solved via IST recently \cite{BJ-IP-2010,Butler-Non-2012}.
Besides, since there is deep connection between Cauchy matrix approach and direct linearization scheme,
it is possible to develop a more general direct linearization scheme.

Adding elliptic information to integrable systems, either to equations or to solutions,
is an interesting topic, which brings the study of the integrable systems into a larger area and more insight
\cite{NA-IMRN-2010,AN-CMP-2010,CX-JPA-2012-1,CZ-JPA-2012,N-2013-Leiden,YN-JMP-2013,JN-JMP-2014,DNY-JPA-2015}.
In this paper, we have shown that the Cauchy matrix approach works for the study of some elliptic integrable systems,
i.e. some terms of these systems are formulated with  an elliptic curve.
The Sylvester equation \eqref{SE-2a} is our starting point.
We derived the discrete as well as continuous elliptic KdV systems.
Apart from solutions, we also obtained Lax representations.
As for solutions, the discrete plane wave factor \eqref{rho-i-d} and continuous one \eqref{rho-i-c}
are defined with the wave numbers $k_i$ which together with $K_i$ obeys the elliptic curve \eqref{ell-curve-2}.
For the Lax pairs  \eqref{dlax pair} and \eqref{lax-con}, $(k,K)$ plays the role of spectral parameters
which also obeys the elliptic curve.

\vskip 20pt
\subsection*{Acknowledgments}
YYS and DJZ are partly supported by the NSFC (No.11371241) and the Project of ``First-class Discipline of Universities in Shanghai''.
FWN is partly supported by the EPSRC grant EP/I038683/1.

\vskip 10pt

\begin{appendix}

\section{Lower triangular Toeplitz matrices}\label{A:1}

Here we collect some properties of Lower triangular Toeplitz (LTT) matrices.

A $N$-th order LTT matrix is a matrix of the following form
\begin{equation}
\bT^{\tyb{N}}(\{a_j\}^{N}_{1})
=\left(\begin{array}{cccccc}
a_1 & 0    & 0   & \cdots & 0   & 0 \\
a_2 & a_1  & 0   & \cdots & 0   & 0 \\
a_3 & a_2  & a_1 & \cdots & 0   & 0 \\
\vdots &\vdots &\cdots &\vdots &\vdots &\vdots \\
a_{N} & a_{N-1} & a_{N-2}  & \cdots &  a_2   & a_1
\end{array}\right)_{N\times N}.
\label{LTT}
\end{equation}
Let
\begin{equation}
\mathcal{T}^{\tyb{N}}=\{\bT^{\tyb{N}}(\{a_j\}^{N}_{1})\},
\end{equation}
then we have $\bA \bB=\bB \bA,~\forall \bA, \bB\in \mathcal{T}^{\tyb{N}}$, i.e.
$\mathcal{T}^{\tyb{N}}$ is a commutative set w.r.t. matrix multiplication.
Particularly, the subset
\begin{equation}
\mathcal{T}_1^{\tyb{N}}=\{\bF \in \mathcal{T}^{\tyb{N}}\, | \, \mathrm{det}(\bF)\neq 0\}
\end{equation}
is an Abelian group.

Obviously, the Jordan block matrix
\begin{equation}
\Ga^{\tyb{N}}_{\ty{J}}(a)
=\left(\begin{array}{cccccc}
a & 0    & 0   & \cdots & 0   & 0 \\
1   & a  & 0   & \cdots & 0   & 0 \\
0   & 1  & a   & \cdots & 0   & 0 \\
\vdots &\vdots &\vdots &\vdots &\vdots &\vdots \\
0   & 0    & 0   & \cdots & 1   & a
\end{array}\right)
\label{Jord-A-1}
\end{equation}
is a LTT matrix, and one can verify the following.
\begin{Proposition}
\label{prop-A-1}
If $\bA \in \mathbb{C}_{N\times N}$ and $\Ga^{\tyb{N}}_{\ty{J}}(a)\bA=\bA \,\Ga^{\tyb{N}}_{\ty{J}}(a)$, then there must be
$\bA\in \mathcal{T}^{\tyb{N}}$.
\end{Proposition}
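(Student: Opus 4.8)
The plan is to reduce the claim to the standard description of the commutant of a single nilpotent Jordan block and then read off the Toeplitz structure by a short entrywise computation. First I would write $\Ga^{\tyb{N}}_{\ty{J}}(a)=a\bI+\boldsymbol{N}$, where $\boldsymbol{N}=(\delta_{i,j+1})_{N\times N}$ is the lower shift matrix carrying $1$'s on the first subdiagonal. Since the scalar matrix $a\bI$ is central, the hypothesis $\Ga^{\tyb{N}}_{\ty{J}}(a)\bA=\bA\,\Ga^{\tyb{N}}_{\ty{J}}(a)$ is equivalent to $\boldsymbol{N}\bA=\bA\boldsymbol{N}$, so it suffices to show that every matrix commuting with $\boldsymbol{N}$ lies in $\mathcal{T}^{\tyb{N}}$.

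Next I would compute the two products entry by entry. Writing $\bA=(A_{i,j})$ with the convention $A_{p,q}=0$ whenever $p$ or $q$ falls outside $\{1,\dots,N\}$, one gets $(\boldsymbol{N}\bA)_{i,j}=A_{i-1,j}$ and $(\bA\boldsymbol{N})_{i,j}=A_{i,j+1}$. Equating these for all $i,j$ yields the recurrence $A_{i-1,j}=A_{i,j+1}$, i.e. $A_{i,j}=A_{i+1,j+1}$, which says precisely that the entries of $\bA$ are constant along each diagonal; hence $\bA$ is Toeplitz.

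It then remains to kill the strictly-upper-triangular part. For this I would feed the boundary case $i=1$ into the same recurrence: $0=A_{0,j}=A_{1,j+1}$ for $j=1,\dots,N-1$, so the first row of $\bA$ vanishes except for $A_{1,1}$. Combined with the diagonal-constancy just obtained, this forces $A_{i,j}=0$ whenever $j>i$, so $\bA$ is lower triangular. Setting $a_\ell:=A_{\ell,1}$ for $\ell=1,\dots,N$ then puts $\bA$ in exactly the form \eqref{LTT}, giving $\bA=\bT^{\tyb{N}}(\{a_j\}^{N}_{1})\in\mathcal{T}^{\tyb{N}}$.

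I do not anticipate a genuine obstacle here: the argument is elementary, and the only point needing a little care is the index bookkeeping at the first row (equivalently the last column), which is exactly what distinguishes a lower triangular Toeplitz matrix from a general Toeplitz one. As an alternative, more structural route, one could observe that $\boldsymbol{N}$ is non-derogatory, so its commutant equals the polynomial algebra $\mathbb{C}[\boldsymbol{N}]=\{c_0\bI+c_1\boldsymbol{N}+\cdots+c_{N-1}\boldsymbol{N}^{N-1}\}$, and each such polynomial is visibly lower triangular Toeplitz; the entrywise computation above, however, is shorter and self-contained.
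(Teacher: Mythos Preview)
Your argument is correct. The paper does not actually supply a proof of this proposition; it merely introduces the Jordan block as an LTT matrix and states that ``one can verify the following'' before recording the result without further detail. Your entrywise computation---reducing to the commutant of the nilpotent shift $\boldsymbol{N}$, reading off diagonal constancy from $A_{i-1,j}=A_{i,j+1}$, and using the $i=1$ boundary to kill the strictly upper part---is exactly the standard verification the paper leaves to the reader, and the alternative via $\mathbb{C}[\boldsymbol{N}]$ that you mention is an equally valid structural shortcut.
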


If $a_j\in \mathbb{C}$, then the LTT matrix \eqref{LTT} can be generated by certain function.
Suppose that $f(k)$ is an analytic function. Using Taylor coefficients
\begin{equation}
a_j=\frac{1}{(j-1)!}\partial_k^{j-1}f(k)|_{k=k_0},~~j=1,2,\cdots,N
\label{aj-f(k)-A}
\end{equation}
we can generate a LTT matrix. The matrix \eqref{LTT} with \eqref{aj-f(k)-A} is called a LTT matrix generated by $f(k)$ at $k=k_0$,
denoted by $\bT^{\tyb{N}}[f(k_0)]$, and $f(k)$ is called generating function.
In this sense, the Jordan block  \eqref{Jord-A-1} is generated by $f(k)=k$ at $k=a$,
and the unit matrix $\bI$ is generated by $f(k)\equiv 1$.
On the other hand, for any LTT matrix \eqref{LTT} with $a_j\in \mathbb{C}$, it can be generated by
the polynomial
\begin{equation}
\alpha(k)=\sum^N_{j=1} a_j (k-k_0)^{j-1}
\label{alpha-k}
\end{equation}
with $\{a_j\}$ as coefficients.
Next, by $[f(k)]^{\tyb{N}}_{k_0}$ we denote a set of functions (equivalence class)
in which all the functions have the same $(N-1)$-th order Taylor polynomial at $k=k_0$ as $f(k)$ has.
Say, $f(k) \sim g(k)$ if they have same $(N-1)$-th order Taylor polynomial at $k=k_0$.
Thus, the LTT matrix \eqref{LTT} can be generated by any $f(k)\in [\alpha(k)]^{\tyb{N}}_{k_0}$.
With such correspondence, we have the following.

\begin{Proposition}
\label{prop-A-3b}
If $\bA=\bT^{\tyb{N}}[f(k_0)]$ and $\bB=\bT^{\tyb{N}}[g(k_0)]$, then
\begin{equation}
\bC=\bA\bB=\bT^{\tyb{N}}[f(k_0)g(k_0)].
\label{CAB}
\end{equation}
i.e., $\bA\bB$ is a LTT matrix generated by $f(k)g(k)$  at $k_0$.
As a result, we have
\[\prod^{s}_{j=1}\bT^{\tyb{N}}[f_j(k_0)]=\bT^{\tyb{N}}[\prod^{s}_{j=1}f_j(k_0)],\]
and
\begin{equation*}
(T^{\tyb{N}}[f(k_0)])^{-1}=\bT^{\tyb{N}}[1/f(k_0)]
\end{equation*}
if $f(k_0)\neq 0$.
\end{Proposition}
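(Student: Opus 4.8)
The plan is to identify the algebra $\mathcal{T}^{\tyb{N}}$ of $N$-th order LTT matrices with the truncated polynomial ring $\mathbb{C}[x]/(x^N)$, where $x$ plays the role of $k-k_0$, and then read off both statements as facts about multiplication of truncated power series. Write $E=\Ga^{\tyb{N}}_{\ty{J}}(0)$ for the nilpotent down-shift matrix, so that $E^N=0$ and every LTT matrix decomposes as $\bT^{\tyb{N}}(\{a_j\}^N_1)=\sum_{j=1}^N a_j E^{j-1}$. First I would record the elementary fact that matrix multiplication on $\mathcal{T}^{\tyb{N}}$ corresponds to polynomial multiplication modulo $x^N$: expanding $\bigl(\sum_i a_i E^{i-1}\bigr)\bigl(\sum_j b_j E^{j-1}\bigr)=\sum_{i,j}a_ib_jE^{i+j-2}$ and discarding the powers $E^{l-1}$ with $l-1\ge N$, which vanish, one gets again an element of $\mathcal{T}^{\tyb{N}}$ whose first-column entries are the Cauchy products $c_l=\sum_{i+j=l+1}a_ib_j$, $l=1,\dots,N$. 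In particular $\bC=\bA\bB\in\mathcal{T}^{\tyb{N}}$.

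Next I would connect these coefficients to the generating functions. If $a_i$ and $b_j$ are the Taylor coefficients of $f$ and $g$ at $k_0$ as in \eqref{aj-f(k)-A}, the Leibniz rule for $\partial_k^{l-1}(fg)$ yields, after dividing by $(l-1)!$,
\[
\frac{1}{(l-1)!}\partial_k^{l-1}(fg)\Big|_{k=k_0}=\sum_{p=0}^{l-1}\frac{\partial_k^pf}{p!}\,\frac{\partial_k^{l-1-p}g}{(l-1-p)!}\Big|_{k=k_0}=\sum_{i+j=l+1}a_ib_j=c_l ,
\]
so the entries of $\bC$ are exactly the Taylor coefficients of the product $fg$ at $k_0$; hence $\bC=\bT^{\tyb{N}}[f(k_0)g(k_0)]$. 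Only analyticity of $f,g$ at $k_0$ is used, so $fg$ is again analytic there; in the language of the equivalence classes $[\,\cdot\,]^{\tyb{N}}_{k_0}$ introduced before the proposition this is just $[f]^{\tyb{N}}_{k_0}[g]^{\tyb{N}}_{k_0}=[fg]^{\tyb{N}}_{k_0}$. The identity for a product of $s$ factors then follows by an immediate induction on $s$, using commutativity of $\mathcal{T}^{\tyb{N}}$ wherever convenient.

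For the inverse, assume $f(k_0)\neq0$; then $1/f$ is analytic near $k_0$, so $\bT^{\tyb{N}}[1/f(k_0)]$ is defined, and the product formula just established gives $\bT^{\tyb{N}}[f(k_0)]\,\bT^{\tyb{N}}[1/f(k_0)]=\bT^{\tyb{N}}[(f\cdot 1/f)(k_0)]=\bT^{\tyb{N}}[1]=\bI$, since the constant function $1$ generates the unit matrix. As $\bT^{\tyb{N}}[f(k_0)]$ is lower triangular with all diagonal entries equal to $f(k_0)\neq0$, it is invertible, so this one-sided identity already identifies $\bT^{\tyb{N}}[1/f(k_0)]$ as its inverse (equivalently, one may invoke that $\mathcal{T}_1^{\tyb{N}}$ is an Abelian group). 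I do not anticipate any genuine obstacle: the whole statement amounts to saying that $\bT^{\tyb{N}}[\,\cdot\,]$ is a ring homomorphism from germs of functions analytic at $k_0$ onto $\mathcal{T}^{\tyb{N}}$, and the only points deserving a little care are the index bookkeeping that matches the convolution condition $i+j=l+1$ with the Leibniz sum, and the observation that the truncation $E^N=0$ is precisely what makes the finite matrix product close up.
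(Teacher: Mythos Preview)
Your argument is correct. The paper's proof proceeds by direct entry-wise computation: it writes $a_{ij}=\tfrac{1}{(i-j)!}\partial_k^{i-j}f(k)|_{k=k_0}$ for $i\ge j$ (similarly $b_{ij}$), expands $c_{ij}=\sum_s a_{is}b_{sj}$, and recognizes the Leibniz formula for $\tfrac{1}{(i-j)!}\partial_k^{i-j}(fg)|_{k=k_0}$. Your route packages the same Leibniz/Cauchy-product identity more structurally, via the decomposition $\bT^{\tyb{N}}(\{a_j\})=\sum_j a_jE^{j-1}$ with $E=\Ga^{\tyb{N}}_{\ty{J}}(0)$ and the identification $\mathcal{T}^{\tyb{N}}\cong\mathbb{C}[x]/(x^N)$. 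The mathematical content is identical; your presentation makes the ring-homomorphism picture explicit and thereby gets the $s$-fold product and the inverse formula essentially for free, whereas the paper only writes out the two-factor case and leaves the corollaries implicit.
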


\begin{proof}
We only need to prove \eqref{CAB}.
Suppose
\[\bA=(a_{ij})_{N\times N},~~\bB=(b_{ij})_{N\times N},~~\bC=(c_{ij})_{N\times N}.\]
Then we have (with $k=k_0$)
\[a_{ij}=\left\{\begin{array}{ll}
                \frac{1}{(i-j)!}\partial_{k}^{i-j} f(k), & i\geq j,\\
                0, & i < j,
                \end{array}\right.,~~
b_{ij}=\left\{\begin{array}{ll}
                \frac{1}{(i-j)!}\partial_{k}^{i-j} g(k), & i\geq j,\\
                0, & i < j,
                \end{array}\right.
\]
and (with $k=k_0$)
\begin{align*}
c_{ij}& =\sum^N_{s=1}a_{is}b_{sj}\\
      & =\sum^N_{s=1}  \frac{1}{(i-s)!}\partial_{k}^{i-s} f(k)\cdot \frac{1}{(s-j)!}\partial_{k}^{s-j} g(k)\\
      & = \sum^N_{l=i-j}  \frac{1}{(i-j-l)!\, l!}\partial_{k}^{i-j-l} f(k)\cdot \partial_{k}^{l} g(k), ~~(i\geq j),\\
      & =  \frac{1}{(i-j)!}\sum^N_{l=i-j}  \partial_{k}^{i-j} (f(k) g(k)), ~~(i\geq j),
\end{align*}
and $c_{ij}=0$ when $i<j$.
Thus, \eqref{CAB} is proved.
\end{proof}

In addition to the LTT matrices, we define the following skew triangular Toeplitz (STT) matrix:
\begin{equation}
\bH^{\tyb{N}}(\{b_j\}^{N}_{1})
=\left(\begin{array}{ccccc}
b_1 & \cdots  & b_{N-2}  & b_{N-1} & b_N\\
b_2 & \cdots & b_{N-1}  & b_N & 0\\
b_3 &\cdots & b_N & 0 & 0\\
\vdots &\vdots & \vdots & \vdots & \vdots\\
b_N & \cdots & 0 & 0 & 0
\end{array}
\right)_{N\times N},
\label{STT}
\end{equation}

The following property holds \cite{XZZ-2014-JNMP}.
\begin{Proposition}\label{prop-A-4}
Let
\begin{align}
& \bar{\mathcal{T}}^{\tyb{N}}=\{\bH^{\tyb{N}}(\{b_j\}^{N}_{1})\}.
\end{align}
Then we have\\
\textrm{(1).}~ $\bH=\bH^T,~\forall \bH\in \bar{\mathcal{T}}^{\tyb{N}}$;\\
\textrm{(2).}~ $\bH \bA=(\bH \bA)^T=\bA^T \bH,~\forall \bA\in \mathcal{T}^{\tyb{N}},~\forall \bH\in \bar{\mathcal{T}}^{\tyb{N}}$.
\end{Proposition}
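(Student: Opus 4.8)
The plan is to reduce both claims to the single structural fact that a skew triangular Toeplitz matrix is a lower triangular Toeplitz (LTT) matrix composed with the $N\times N$ reversal matrix $J$, defined by $J_{ij}=\delta_{i+j,\,N+1}$ (so $J=J^{\ST}$ and $J^2=\bI$). Part (1) falls out at once from the definition \eqref{STT}: the $(i,j)$ entry of $\bH=\bH^{\tyb{N}}(\{b_j\}^{N}_{1})$ is $b_{i+j-1}$ when $i+j-1\le N$ and $0$ otherwise, so it depends only on $i+j$; hence $\bH_{ij}=\bH_{ji}$, i.e. $\bH=\bH^{\ST}$.

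For part (2) the first step is an elementary bookkeeping lemma: for every $\bH\in\bar{\mathcal{T}}^{\tyb{N}}$ one has $J\bH=\bT^{\tyb{N}}(\{b_{N+1-j}\}^{N}_{1})\in\mathcal{T}^{\tyb{N}}$, and conversely $J$ times any LTT matrix lies in $\bar{\mathcal{T}}^{\tyb{N}}$. I would check this by reading off entries: left multiplication by $J$ carries the nonzero band $\{(i,j):i+j\le N+1\}$ of $\bH$ onto the lower triangle $\{(i,j):i\ge j\}$ and reverses the generating list, the only point needing attention being the cutoff at index $N$. Granting this, write $\bH=J\bL$ with $\bL\in\mathcal{T}^{\tyb{N}}$. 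Then for $\bA\in\mathcal{T}^{\tyb{N}}$ we get $\bH\bA=J(\bL\bA)$, and since $\mathcal{T}^{\tyb{N}}$ is closed under multiplication (see proposition \ref{prop-A-3b}) we have $\bL\bA\in\mathcal{T}^{\tyb{N}}$, whence $\bH\bA=J(\bL\bA)\in\bar{\mathcal{T}}^{\tyb{N}}$ and therefore $\bH\bA=(\bH\bA)^{\ST}$ by part (1). The remaining identity is then immediate: $(\bH\bA)^{\ST}=\bA^{\ST}\bH^{\ST}=\bA^{\ST}\bH$, again using part (1). Chaining these yields $\bH\bA=(\bH\bA)^{\ST}=\bA^{\ST}\bH$.

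There is no deep difficulty here; the statement is elementary linear algebra, and the \emph{only} place that needs care is the index arithmetic in the reversal lemma — getting the reversal of the coefficient list and the truncation at size $N$ exactly right, so that ``$J$ times LTT'' is genuinely an STT matrix and not a shifted variant. If one prefers to bypass $J$, the symmetry $\bH\bA=(\bH\bA)^{\ST}$ can be gotten from a single direct computation: reindex $(\bH\bA)_{ij}=\sum_k b_{i+k-1}a_{k-j+1}$ (sum over $j\le k\le N+1-i$) by $\ell=i+k-1$ to obtain $\sum_{\ell=i+j-1}^{N} b_\ell\,a_{\ell-i-j+2}$, which is manifestly invariant under $i\leftrightarrow j$. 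Either route is short; I would present the factorization version, since it ties the argument to the LTT facts already established earlier in this appendix.
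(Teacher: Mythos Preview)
Your argument is correct. The paper itself does not prove this proposition; it merely states the result and cites \cite{XZZ-2014-JNMP}, so there is nothing in the paper to compare your route against. Both versions you sketch---the factorization $\bH=J\bL$ via the reversal permutation $J$ and the direct reindexing showing $(\bH\bA)_{ij}=\sum_{\ell=i+j-1}^{N}b_\ell\,a_{\ell-i-j+2}$ depends only on $i+j$---are valid and short. One small remark: for the closure step you invoke proposition~\ref{prop-A-3b}, which as stated covers LTT matrices generated by analytic functions; to make the reference airtight you should note (as the paper does just above that proposition) that \emph{every} complex LTT matrix is $\bT^{\tyb{N}}[\alpha(k_0)]$ for the polynomial $\alpha$ in \eqref{alpha-k}, so the product formula indeed applies to all of $\mathcal{T}^{\tyb{N}}$.
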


It can be extended to the following generic case.
\begin{Proposition}\label{prop-A-5}
Let
\begin{subequations}
\begin{align}
& \mathcal{G}^{\tyb{N}}=\{\mathrm{Diag}(\Ga^{\tyb{N}}_{\ty{D}}(\{a_{1,j}\}^{N_1}_{1}), \bT^{\tyb{N}}(\{a_{2,j}\}^{N_2}_{1}),
\bT^{\tyb{N}}(\{a_{3,j}\}^{N_3}_{1}), \cdots , \bT^{\tyb{N}}(\{a_{s,j}\}^{N_s}_{1}))\},\\
& \bar{\mathcal{G}}^{\tyb{N}}=\{\mathrm{Diag}(\Ga^{\tyb{N}}_{\ty{D}}(\{b_{1,j}\}^{N_1}_{1}), \bH^{\tyb{N}}(\{b_{2,j}\}^{N_2}_{1}),
\bH^{\tyb{N}}(\{b_{3,j}\}^{N_3}_{1}), \cdots , \bH^{\tyb{N}}(\{b_{s,j}\}^{N_s}_{1}))\},
\end{align}
where $0\leq N_j\leq N$ for $j=0,1,\cdots,N$ and $\sum^{s}_{j=1}N_j=N$.
\end{subequations}
Then we have\\
\textrm{(1).}~ $\bA \bB=\bB \bA,~\forall \bA, \bB\in \mathcal{G}^{\tyb{N}}$;\\
\textrm{(2).}~ $\bH=\bH^T,~\forall \bH\in \bar{\mathcal{G}}^{\tyb{N}}$;\\
\textrm{(3).}~ $\bH \bA=(\bH \bA)^T=\bA^T \bH,~\forall \bA\in \mathcal{G}^{\tyb{N}},~\forall \bH\in \bar{\mathcal{G}}^{\tyb{N}}$.
\end{Proposition}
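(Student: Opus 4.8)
The plan is to reduce everything to the single-block statements already established in proposition~\ref{prop-A-4}, using the fact that the matrices in $\mathcal{G}^{\tyb{N}}$ and $\bar{\mathcal{G}}^{\tyb{N}}$ are block-diagonal for one and the same partition $N=N_1+N_2+\cdots+N_s$. Because the off-diagonal blocks vanish, matrix multiplication and transposition both act block-wise: for $\bA=\mathrm{Diag}(\bA_1,\dots,\bA_s)$ and $\bB=\mathrm{Diag}(\bB_1,\dots,\bB_s)$ one has $\bA\bB=\mathrm{Diag}(\bA_1\bB_1,\dots,\bA_s\bB_s)$ and $\bA^T=\mathrm{Diag}(\bA_1^T,\dots,\bA_s^T)$. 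Hence each of the three assertions splits into independent claims about the corresponding pairs of blocks, where the first block is a diagonal matrix and each remaining block is an LTT matrix (for elements of $\mathcal{G}^{\tyb{N}}$) or an STT matrix (for elements of $\bar{\mathcal{G}}^{\tyb{N}}$) of fixed size $N_j$.

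For part~(1), on the first block we only need that two diagonal matrices of equal size commute, which is immediate; on each remaining pair of blocks we need that two order-$N_j$ LTT matrices commute, which is the commutativity of $\mathcal{T}^{\tyb{N_j}}$ recorded in Appendix~\ref{A:1}. For part~(2), the first block is diagonal, hence symmetric, and each remaining block is symmetric by proposition~\ref{prop-A-4}(1). For part~(3), on the first block $\bH_1$ and $\bA_1$ are both diagonal, so $\bH_1\bA_1$ is diagonal (thus symmetric) and $\bH_1\bA_1=\bA_1\bH_1=\bA_1^T\bH_1$; on each remaining block $\bA_j\in\mathcal{T}^{\tyb{N_j}}$ and $\bH_j\in\bar{\mathcal{T}}^{\tyb{N_j}}$, so $\bH_j\bA_j=(\bH_j\bA_j)^T=\bA_j^T\bH_j$ by proposition~\ref{prop-A-4}(2). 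Reassembling the blocks yields the three stated identities.

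I do not expect a genuine obstacle here; the only points requiring care are bookkeeping ones. One must fix a common partition so that the block products in parts~(1) and~(3) are meaningful, and one should dispose of the degenerate sizes: a block with $N_j=0$ is empty and contributes nothing, while a block with $N_j=1$ is a scalar, which is simultaneously diagonal, LTT and STT and so is covered by every case. With these remarks in place the argument is a short reduction to proposition~\ref{prop-A-4}, and I would present it that way rather than redoing the Toeplitz computations.
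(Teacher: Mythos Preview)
Your proposal is correct and is precisely the argument the paper has in mind: the paper presents this proposition simply as the block-diagonal extension of proposition~\ref{prop-A-4} (``It can be extended to the following generic case'') without spelling out a proof, and your block-wise reduction to the diagonal and single-Toeplitz-block cases is exactly that extension.
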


\section{Explicit forms of $\br$ and $\bc$}\label{A:2}

Here we list out the explicit forms of $\br$ and $\bc$ satisfying \eqref{dr} and \eqref{evo-rs}, respectively.

\subsection{Solution to \eqref{dr}}

\noindent
(1).~ When $\Ga=\Ga^{\tyb{N}}_{\ty{D}}(\{k_j\}^{N}_{1})$ we have
\begin{align}
\br=\br_{\hbox{\tiny{\it D}}}^{\hbox{\tiny{[{\it N}]}}}(\{k_j\}_{1}^{N})=(r_1, r_2, \cdots, r_N)^T, ~~\mathrm{with}~ r_i=\rho_i,
\label{r-d}
\end{align}
where
\begin{equation}
\rho_i=\Bigl(\frac{a+k_i}{a-k_i}\Bigr)^n\Bigl(\frac{b+k_i}{b-k_i}\Bigr)^m \rho_{i}^0,
\label{rho-i-d}
\end{equation}
and $\rho_{i}^0$ is a constant.\\
(2).~When $\Ga=\Ga^{\tyb{N}}_{\ty{J}}(k_1)$, we have
\begin{align}
\br=\br_{\hbox{\tiny{\it J}}}^{\hbox{\tiny{[{\it N}]}}}(k_1)=(r_1, r_2, \cdots, r_N)^T, ~~\mathrm{with}~ r_i=\frac{\partial^{i-1}_{k_1}\rho_1}{(i-1)!},
\label{r-jor}
\end{align}
where $\rho_1$ is defined in \eqref{rho-i-d}.\\
(3).~When $\Ga=\Ga^{\tyb{N}}_{\ty{G}}$ we have
\begin{equation}
\br=\left(
\begin{array}{l}
\br_{\ty{D}}^{\tyb{N$_1$}}(\{k_j\}_{1}^{N_1})\\
\br_{\ty{J}}^{\tyb{N$_2$}}(k_{N_1+1})\\
\br_{\ty{J}}^{\tyb{N$_3$}}(k_{N_1+2})\\
\vdots\\
\br_{\ty{J}}^{\tyb{N$_s$}}(k_{N_1+(s-1)})
\end{array}
\right),
\label{r-d-g}
\end{equation}
where $\br_{\hbox{\tiny{\it D}}}^{\hbox{\tiny{[{\it N}$_1$]}}}(\{k_j\}_{1}^{N_1})$ and
$\br_{\hbox{\tiny{\it J}}}^{\hbox{\tiny{[{\it N}$_i$]}}}(k_j)$ are defined as in \eqref{r-d} and  \eqref{r-jor}, respectively.

\subsection{Solution to \eqref{evo-rs}}

\noindent
(1).~ When $\Ga=\Ga^{\tyb{N}}_{\ty{D}}(\{k_j\}^{N}_{1})$ we have
$\br=\br_{\hbox{\tiny{\it D}}}^{\hbox{\tiny{[{\it N}]}}}(\{k_j\}_{1}^{N})$ is given in the form \eqref{r-d} and
\begin{align}
\bc=\bc_{\hbox{\tiny{\it D}}}^{\hbox{\tiny{[{\it N}]}}}(\{k_j\}_{1}^{N})=(c_1, c_2, \cdots, c_N)^T, ~~\mathrm{with}~ c_i=r_i,
\label{c-c}
\end{align}
but here
\begin{align}
\rho_i=e^{\xi_i},~~\xi_i=k_{i}x+ 4k^3_i t +\xi^{(0)}_i,~\mathrm{with~ constant~}\xi^{(0)}_i.
\label{rho-i-c}
\end{align}
(2).~When $\Ga=\Ga^{\tyb{N}}_{\ty{J}}(k_1)$, we have
$\br=\br_{\hbox{\tiny{\it J}}}^{\hbox{\tiny{[{\it N}]}}}(k_1)$ is given in the form \eqref{r-d} and
\begin{align}
\bc=\bc_{\hbox{\tiny{\it J}}}^{\hbox{\tiny{[{\it N}]}}}(k_1)=(c_1, c_2, \cdots, c_N)^T, ~~\mathrm{with}~ c_i=r_{N-i},
\label{c-jor}
\end{align}
where $\rho_1$ is defined in \eqref{rho-i-c}.\\
(3).~When $\Ga=\Ga^{\tyb{N}}_{\ty{G}}$ we have $\br$ is given in the form of \eqref{r-d-g} and
\begin{equation}
\bc=\left(
\begin{array}{l}
\bc_{\ty{D}}^{\tyb{N$_1$}}(\{k_j\}_{1}^{N_1})\\
\bc_{\ty{J}}^{\tyb{N$_2$}}(k_{N_1+1})\\
\bc_{\ty{J}}^{\tyb{N$_3$}}(k_{N_1+2})\\
\vdots\\
\bc_{\ty{J}}^{\tyb{N$_s$}}(k_{N_1+(s-1)})
\end{array}
\right),
\end{equation}
where $\bc_{\hbox{\tiny{\it D}}}^{\hbox{\tiny{[{\it N}$_1$]}}}(\{k_j\}_{1}^{N_1})$ and
$\bc_{\hbox{\tiny{\it J}}}^{\hbox{\tiny{[{\it N}$_i$]}}}(k_j)$ are defined as in \eqref{c-c} and  \eqref{c-jor}, respectively.

\end{appendix}

\vskip 20pt

{\small
}

\end{document}